\documentclass[11pt,letterpaper]{article}

%% various packages
\usepackage{xcolor}
\usepackage{newclude}
\usepackage{enumitem}
\usepackage{multirow}
\usepackage{hyperref}
\usepackage{geometry}
\usepackage{xspace}

%% amsmath, amsymb package
\usepackage{amsthm,amsmath,amssymb} 		
\def\E{{\rm E}\,} % expectation symbol

\newtheorem{theorem}{Theorem}
\newtheorem{proposition}{Proposition}
\newtheorem{lemma}{Lemma}

%% referencing commands

\newcommand{\thmref}[1]{Theorem~\lowercase{\ref{#1}}}

\newcommand{\propref}[1]{Proposition~\lowercase{\ref{#1}}}

%% macros to remove excessive space
\def\clap#1{\hbox to 0pt{\hss#1\hss}}

%% abbreviations
\newcommand{\ie}{i.e.,\xspace}

%% notation
\newcommand{\reals}{\mathbb{R}}
\newcommand{\Mbar}{\bar{M}}
\newcommand{\xbar}{\bar{x}}
\newcommand{\pbar}{\bar{p}}
\newcommand{\vtilde}{\tilde{v}}

\newcommand{\xtilde}{\tilde{x}}
\newcommand{\ptilde}{\tilde{p}}

%% natbib
\usepackage[numbers]{natbib}

%% enumerations
\setenumerate{leftmargin=*}
\setenumerate[1]{label=(\alph*), ref=(\alph*)}

%% pseudo code
\usepackage{algpseudocode}
\usepackage{algorithm}
\algrenewcommand{\algorithmiccomment}[1]{\texttt{\textbackslash\textbackslash\ #1}}

%% additional commands

\newcommand{\mc}[3]{\multicolumn{#1}{#2}{#3}}
\newcommand{\mr}[3]{\multirow{#1}{#2}{#3}}

\newcommand{\POall}{PO\@\xspace}

\newcommand{\cont}{\ (Continued)}

\newcommand{\ica}{$b_2 > v_{2,1} +v_{2,2}$}
\newcommand{\icb}{$v_{2,1}>v_{1,1}$}
\newcommand{\icbp}{$v_{1,1}>v_{2,1}$}
\newcommand{\icc}{$v_{2,2}>v_{1,2}$}
\newcommand{\iccp}{$v_{1,2}>v_{2,2}$}
\newcommand{\icd}{$b_1>v_{1,1}$}
\newcommand{\ice}{$v_{1,1}v_{2,2}>v_{1,2}v_{2,1}$}
\newcommand{\icf}{$v_{2,1} + v_{2,2} > b_1$}

\sloppy

%% document starts
\begin{document}

% % % % % % % % % % % % % % % % % % % % % % % % % % % % % % % % % % % % % % % %

%% title, authors, affiliations
\title{Auctions with Heterogeneous Items and Budget Limits\thanks{%
  This work was funded by the Vienna Science and Technology Fund (WWTF)
  through project ICT10-002, by the University of Vienna through IK
  I049-N, and by a Google Research Award.}}

\author{%
  Paul D\"utting\thanks{%
  \'Ecole Polytechnique F\'ed\'erale de Lausanne, Switzerland,
  \texttt{paul.duetting@epfl.ch}}
  \and
  Monika Henzinger\thanks{%
  Fakult\"at f\"ur Informatik, Universit\"at Wien, Austria,
  \texttt{monika.henzinger@univie.ac.at}}
  \and
  Martin Starnberger\thanks{%
  Fakult\"at f\"ur Informatik, Universit\"at Wien, Austria,
  \texttt{martin.starnberger@univie.ac.at}}
}

\date{}

\maketitle

% % % % % % % % % % % % % % % % % % % % % % % % % % % % % % % % % % % % % % % %

%% abstract
\begin{abstract}
We study individual rational, Pareto optimal, and incentive compatible 
mechanisms for auctions with heterogeneous items and budget limits. 
For multi-dimensional valuations we show that there can be no {\em 
deterministic} mechanism with these properties for {\em divisible} 
items. We use this to show that there can also be no {\em randomized} 
mechanism that achieves this for either {\em divisible} or {\em 
indivisible} items. For single-dimensional valuations we show that 
there can be no {\em deterministic} mechanism with these properties 
for indivisible items, but that there is a {\em randomized} mechanism 
that achieves this for either divisible or indivisible items.
The impossibility results hold for {\em public} budgets, while the 
mechanism allows {\em private} budgets, which is in both cases the 
harder variant to show. While all positive results are polynomial-time 
algorithms, all negative results hold independent of complexity 
considerations.
\end{abstract}

% % % % % % % % % % % % % % % % % % % % % % % % % % % % % % % % % % % % % % % %

%% section: introduction
\section{Introduction}
A canonical problem in Mechanism Design is the design of economically efficient
auctions that satisfy individual rationality and incentive compatibility. In 
settings with quasi-linear utilities these goals are achieved by the 
Vickrey-Clarke-Groves (VCG) mechanism. In many practical situations, including
settings in which the agents have budget limits, the quasi-linear assumption 
fails to be true and, thus, the VCG mechanism is not applicable.

Ausubel \cite{Ausubel04} describes an ascending-bid auction for homogeneous 
items that yields the same outcome as the sealed-bid Vickrey auction, but 
offers advantages in terms of simplicity, transparency, and privacy preservation. 
In his concluding remarks he points out that ``when budgets impair the bidding 
of true valuations in a sealed-bid Vickrey auction, a dynamic auction may 
facilitate the expression of true valuations while staying within budget 
limits'' (p.~1469).

Dobzinski et al.~\cite{DobzinskiLaviNisan08} show that an adaptive version 
of Ausubel's ``clinching auction'' is indeed the unique mechanism that satisfies 
individual rationality, Pareto optimality, and incentive compatibility in 
settings with {\em public} budgets. They use this fact to show that there can 
be no mechanism that achieves those properties for {\em private} budgets.

An important restriction of Dobzinski et al.'s impossibility result for private
budgets is that it only applies to {\em deterministic} mechanisms. In fact, as  
Bhattacharya et al.~\cite{BhattacharyaEtAl10} show, there exists a {\em 
randomized} auction that is individual rational, Pareto optimal, and incentive
compatible with private budgets.

All these results assume that the items are homogeneous, although as Ausubel
\cite{Ausubel06} points out, ``situations abound in diverse industries in 
which heterogeneous (but related) commodities are auctioned'' (p.~602). He 
also describes an ascending-bid auction, the ``crediting and debiting auction'', 
that takes the place of the ``clinching auction'' when items are heterogeneous.

Positive and negative results for {\em deterministic} mechanisms and public 
budgets are given in \cite{FiatEtAl11,LaviMay11,GoelEtAl12,ColiniEtAl11}. We 
focus on {\em randomized} mechanisms, and prove positive results for private 
budgets and negative results for public budgets. {\em We thus explore the power 
and limitations of randomization in settings with heterogeneous items and budget 
limits.}

%% model
\paragraph{Model.}
There are $n$ agents and $m$ items. The items are either divisible or 
indivisible. Each agent has a valuation for each item and each agent has 
a budget. Agents can be assigned more than one item and valuations are 
additive across items. All valuations are private. We distinguish between 
settings in which budgets are public and settings in which budgets are 
private.
A mechanism is used to compute assignments and payments 
based on the reported valuations and the reported budgets. An agent's 
utility is defined as valuation for the assigned items minus the payment 
if the payment does not exceed the budget and the utility is minus infinity 
otherwise. We assume that agents are utility maximizers and as such need not 
report their true valuations and true budgets.

Our goal is to design mechanisms with certain desirable properties or to show 
that no such mechanism exists. For deterministic mechanisms we require that 
the respective properties are always satisfied. For randomized mechanisms we 
either require that the properties hold for all outcomes or that they hold in
expectation. In the former case they are satisfied {\em ex post}, in the latter 
they are satisfied {\em ex interim}. 

We are interested in the following properties:

(a) {\em Individual rationality (IR)}: A mechanism is IR if all outcomes it 
    produces give non-negative utility to the agents {\em and} the sum of the 
    payments is non-negative.
(b) {\em Pareto optimality (PO)}: A mechanism is PO if it produces an outcome 
    such that there is no other outcome in which all agents and the auctioneer 
    are no worse off and at least one of the agents or the auctioneer is 
    strictly better off.
    \footnote{If the outcome for which we want to establish PO is IR, then we 
    only have to consider alternative outcomes that are IR. In the alternative
    outcome individual payments may be negative, even if the original outcome 
    satisfied IR and NPT. See the arXiv version of \cite{FiatEtAl11} for a more 
    detailed discussion.}
(c) {\em No positive transfers (NPT)}: A mechanism satisfies NPT if it produces 
    an outcome in which all payments are non-negative.
(d) {\em Incentive compatibility (IC)}: A mechanism is IC if each agent 
    maximizes his utility by reporting his true valuation(s) and true budget no 
    matter what the other agents' reported valuations and reported budgets are.
    If the budget is public then the agents can only report their true budgets.

Following prior work we focus on IR, PO, NPT, and IC for positive results and 
on IR, PO, and IC for negative results. Both the inclusion of NPT for positive 
results and the exclusion of NPT for negative results strengthens the respective
results. 

%% results
\paragraph{Results.}
We analyze two settings with heterogeneous items, one with multi-dimensional 
valuations and one with single-dimensional valuations. In the setting with 
multi-dimensional valuations, each agent has an arbitrary, non-negative 
valuation for each of the items. In the setting with single-dimensional 
valuations, which is inspired by sponsored search auctions, an agent's 
valuation for an item is the product of an item-specific quality and an 
agent-specific valuation. Our motivation for studying this setting is that 
an advertiser might want to show his ad in multiple slots on a search result 
page.

(a) For {\bf  multi-dimensional} valuations the impossibility result of 
\cite{FiatEtAl11} implies that there can be no deterministic mechanism 
for {\em  indivisible} items that is IR, PO, and IC for public budgets. 
We show that there also can be no deterministic mechanism with these 
properties for {\em divisible} items. We use this to show that for both 
divisible and indivisible items there can be no randomized mechanism
that is IR ex interim, PO ex interim, and IC ex interim. 
This is the first impossibility result for randomized mechanisms for 
auctions with budget limits. It establishes an interesting separation 
between randomized mechanisms for single-dimensional valuations, where 
such mechanisms exist (see below), and multi-dimensional valuations,
where no such mechanism exists.

(b) For {\bf single-dimensional} valuations the impossibility result of 
\cite{DobzinskiLaviNisan08} implies that there can be no deterministic 
mechanism for indivisible items that is IR, PO, and IC for {\em private} 
budgets. We show that for heterogeneous items there can also be no 
deterministic mechanism for indivisible items that is IR, PO, and IC for 
{\em public} budgets. We thus obtain a strong separation between
deterministic mechanisms, that do {\em not} exist for {\em public} budgets, 
and randomized mechanisms, that exist for {\em private} budgets (see below). 
This separation is stronger than in the homogeneous items setting, where 
a deterministic mechanism exists for public budgets \cite{DobzinskiLaviNisan08}.
Additionally, our impossibility result is tight in the sense that if any
of the conditions is relaxed such a mechanism exists: (i) For 
{\em homogeneous}, indivisible items a deterministic mechanism is given 
in \cite{DobzinskiLaviNisan08}, (ii) we give a deterministic mechanism for 
heterogeneous, {\em divisible} items, and (iii) we give a {\em randomized} 
mechanism for heterogeneous, indivisible items.

(c) For {\bf single-dimensional} valuations we give mechanisms that extend 
earlier work for homogeneous items to heterogeneous items. Specifically, we 
give a {\em randomized} mechanism that satisfies IR ex interim, NPT ex post, 
PO ex post, and IC ex interim for {\em divisible} or {\em indivisible} items 
and  {\em public} or {\em private} budgets. 
Additionally, for the case of {\em divisible} items and {\em public} budgets 
we give a  {\em deterministic} mechanism that is IR, NPT, PO, and IC.

We summarize our results and the results from related work described next 
in Table~\ref{tab:summary_indivisible} and Table~\ref{tab:summary_divisible} 
below.

%% related work
\paragraph{Related Work.}
%% homogeneous items
The setting in which all items are identical was first studied by 
\cite{DobzinskiLaviNisan08}. By adapting the ``clinching auction'' 
of \cite{Ausubel04} from settings without budgets to settings with
budgets they obtain deterministic mechanisms that are IR, NPT, PO, 
and IC with public budgets for divisible and indivisible items. They also 
show that these mechanisms are the only mechanisms that are IR, PO, and IC, 
and that they are not IC for private budgets, implying that there can be no 
deterministic mechanism that is IR, PO, and IC when the budgets are private. 
However, \cite{BhattacharyaEtAl10} showed that there is such a mechanism 
for private budgets that is randomized. Note that both, 
\cite{DobzinskiLaviNisan08} and \cite{BhattacharyaEtAl10} study only 
homogeneous items. 

Impossibility results for general, non-additive valuations were given in 
\cite{LaviMay11,ColiniEtAl11,GoelEtAl12}. Combined they show that 
there can be no deterministic mechanism for indivisible items that is IR, PO, 
and IC with public budgets for monotone valuations with decreasing marginals. 
These impossibility results do not apply to additive valuations, which is 
the case that we study. 

%% heterogeneous items
Heterogeneous items were first studied in \cite{FiatEtAl11}. 
In their model each agent has the same valuation for 
each item in an agent-dependent interest set and zero for all other items. 
They give a deterministic mechanism for indivisible items that satisfies IR, NPT, 
PO, and IC when both interest sets and budgets are public. They also show that 
when the interest sets are \emph{private}, then there can be no
deterministic mechanism that satisfies IR, PO, and IC. This implies that for 
\emph{indivisible} items and public budgets there can be no 
\emph{deterministic} IR, PO, and IC mechanism for unconstrained valuations.

Settings with heterogeneous items were in parallel to this paper studied 
by \cite{ColiniEtAl11} and \cite{GoelEtAl12}. The former 
study problems with multiple keywords, 
each having multiple slots. Agents have unit demand per keyword. They are 
either interested in a subset of the keywords and have identical valuations 
for the slots or they are interested in all keywords and have sponsored 
search like valuations for the slots. The latter study settings in which the 
agents have identical valuations and the allocations must satisfy polymatroidal 
or polyhedral constraints.

The settings studied in \cite{ColiniEtAl11,GoelEtAl12} are more general than 
the single-dimensional valuations setting studied here. 
On the one hand this implies that their positive results apply to the 
single-dimensional valuations setting studied here, and show that there are 
deterministic mechanisms for divisible items and randomized mechanisms for 
both divisible and indivisible items that are IC with \emph{public} budgets.
On the other hand this implies that our negative result for the 
single-dimensional valuations setting applies to the settings studied in 
these papers, and shows that there can be no 
{\it deterministic} mechanisms that are IC with {\it public} budgets for 
{\it indivisible} items.
Finally, the impossibility results presented in \cite{ColiniEtAl11,GoelEtAl12} 
either assume that the valuations are non-additive or that the allocations 
satisfy arbitrary polyhedral constraints and have therefore no implications 
for the multi-dimensional valuations setting studied here.

%% overview
\paragraph{Overview.}
We summarize the results from related work and this paper for indivisible items 
in Table~\ref{tab:summary_indivisible} and  for divisible items in 
Table~\ref{tab:summary_divisible}. We use a plus ($+$ or $\oplus$) to indicate 
that there is an IR, PO, NPT, and IC mechanism. We use a minus ($-$ or $\ominus$) 
to indicate that there is no IR, PO, and IC mechanism. We use $+$ and $-$ for 
results from related work and $\oplus$ and $\ominus$ for results from this paper. 
A question mark ($?$) indicates that nothing is known for this setting. For the 
model of \cite{FiatEtAl11} the table has two entries, one for public and one 
for private interest sets. {\em While all positive results from this paper are 
polynomial-time algorithms, all negative results hold independent of complexity 
considerations.}

%% adjust spacing before table
%\vspace{-8pt}

\begin{table}[ht]
{
\small
\caption{Results for {\em Indivisible} Items from Related Work and this Paper}
{
\begin{tabular}{cccccccc}
\hline
&       &\mc{2}{c}{homogeneous}&\mc{4}{c}{heterogeneous \& additive}           \\
&&\mr{2}{*}{add.}&\mr{2}{*}{non-add. }&interest set&multi-keyword&\mr{2}{*}{single-dim.}&\mr{2}{*}{multi-dim.}\\
&budgets&&&public/private&unit demand&&\\
\hline
\mr{2}{*}{det.} &public  &$+$ \cite{DobzinskiLaviNisan08}&$-$\cite{LaviMay11,ColiniEtAl11}&$+$\cite{FiatEtAl11}/$-$\cite{FiatEtAl11}&$\ominus$&$\ominus$&$-$ \cite{FiatEtAl11}\\
                &private &$-$ \cite{DobzinskiLaviNisan08}&$-$ \cite{DobzinskiLaviNisan08}&$-$\cite{DobzinskiLaviNisan08}/$-$\cite{DobzinskiLaviNisan08}&$-$ \cite{DobzinskiLaviNisan08}&$-$\cite{DobzinskiLaviNisan08}&$-$ \cite{DobzinskiLaviNisan08}\\
\mr{2}{*}{rand.}&public  &$+$ \cite{DobzinskiLaviNisan08}&?&$+$\cite{FiatEtAl11}/?&$+$\cite{ColiniEtAl11,GoelEtAl12}&$\oplus$&$\ominus$\\
                &private &$+$ \cite{BhattacharyaEtAl10}&?&?/?&?&$\oplus$&$\ominus$\\
\hline
\end{tabular}
}
}
\label{tab:summary_indivisible}
\end{table}

%% adjust spacing between tables
\vspace{-12pt}

\begin{table}[ht]
{
\small
\caption{Results for {\em Divisible} Items from Related Work and this Paper}
{
\begin{tabular}{cccccccc}
\hline
&\mc{1}{c}{      }&\mc{2}{c}{homogeneous}&\mc{4}{c}{heterogeneous \& additive}\\
&&\mr{2}{*}{add.}&\mr{2}{*}{non-add. }&polymatroid&multi-keyword&\mr{2}{*}{single-dim.}&\mr{2}{*}{multi-dim.}\\
&budgets&&&constraints&unit demand&&\\
\hline
\mr{2}{*}{det.} &public  &$+$ \cite{DobzinskiLaviNisan08,BhattacharyaEtAl10}&$-$\cite{GoelEtAl12}&$+$\cite{GoelEtAl12}&$+$\cite{ColiniEtAl11,GoelEtAl12}&$\oplus$&$\ominus$\\
                &private &$-$ \cite{DobzinskiLaviNisan08}&$-$\cite{DobzinskiLaviNisan08}&$-$ \cite{DobzinskiLaviNisan08}&$-$ \cite{DobzinskiLaviNisan08}&$-$ \cite{DobzinskiLaviNisan08}&$-$ \cite{DobzinskiLaviNisan08}\\
\mr{2}{*}{rand.}&public  &$+$ \cite{DobzinskiLaviNisan08,BhattacharyaEtAl10}&?&$+$\cite{GoelEtAl12}&$+$\cite{ColiniEtAl11,GoelEtAl12}&$\oplus$&$\ominus$\\
                &private &$+$ \cite{BhattacharyaEtAl10}&?&?&?&$\oplus$&$\ominus$\\
\hline
\end{tabular}
}
}
\label{tab:summary_divisible}
\end{table}

%% adjust spacing after table
\vspace{-8pt}

%% techniques
\paragraph{Techniques.}
Our technical contributions are as follows:

(a) For multi-dimensional valuations we obtain a partial characterization of IC
    by generalizing the ``weak  monotonicity'' (WMON) condition of 
    \cite{BikhchandaniEtAl06} from settings {\em without budgets} to settings 
    {\em with public budgets}. We obtain our impossibility result for 
    deterministic mechanisms and divisible items by showing that in certain 
    settings WMON will be violated.
    For this we use that multi-dimensional valuations enable the agents to lie 
    in a sophisticated way: While all previous impossibility 
    proofs in this area used agents that either only overstate or only 
    understate their valuations, we use an agent that overstates his valuation 
    for one item and understates his valuation for another. 

(b) For single-dimensional valuations and both divisible and indivisible 
    items we % provide exact characterizations of PO and IC with public budgets. 
    characterize PO by a simpler ``no trade'' (NT) condition. Although this
    condition is more complex than similar conditions in 
    \cite{DobzinskiLaviNisan08,BhattacharyaEtAl10,FiatEtAl11}, we are able to 
    show that an outcome is PO if and only if it satisfies NT.
    We also generalize the ``classic''characterization results of IC mechanism
    of \cite{Myerson81,ArcherTardos01} from settings {\em without budgets} to 
    settings with {\em public budgets} by showing that a mechanism is IC with 
    public budgets if and only if it satisfies ``value monotonicity'' (VM) and 
    ``payment identity'' (PI). The characterizations of PO and IC with public
    budgets play a crucial role in the proof of our impossibility result for   
    indivisible items, which uses NT and PI to derive lower bounds on the 
    agents' payments that conflict with the upper bounds on the payments 
    induced by IR. 

(c) We establish the positive results for single-dimensional valuations and both
    divisible and indivisible items by giving a new reduction of this case to 
    the case of a single and by definition homogeneous item. This allows us to 
    apply the techniques that \cite{BhattacharyaEtAl10} developed for the 
    single-item setting. This is a general reduction between the heterogeneous 
    items setting and the homogeneous items setting, which is likely to have 
    further applications.

(d) We give an explicit polynomial-time algorithm for the ``adaptive clinching
    auction'' for {\em divisible} items and an arbitrary number of agents.
    To the best of our knowledge we are the first ones to actually give a 
    polynomial-time version of this auction for arbitrarily many agents.

% % % % % % % % % % % % % % % % % % % % % % % % % % % % % % % % % % % % % % % %

%% section: problem statement
\section{Problem Statement}
We are given a set $N$ of $n$ agents and a set $M$ of $m$ items. We 
distinguish between settings with divisible items and settings with 
indivisible items. In both settings we use $X = \prod_{i=1}^{n} X_i$ for the 
allocation space. For divisible items the allocation space is $X_i 
= [0,1]^m$ for all agents $i \in N$ and $x_{i,j} \in [0,1]$ denotes the 
fraction of item $j \in M$ that is allocated to agent $i\in N$. 
For indivisible items the allocation space is $X_i = \{0,1\}^m$ for all agents 
$i \in N$ and $x_{i,j} \in \{0,1\}$ indicates whether item $j \in M$ is 
allocated to agent $i \in N$ or not. In both cases we require that 
$\sum_{i=1}^{n} x_{i,j} \le 1$ for all items $j \in M$. We do {\em not} require 
that $\sum_{j=1}^{m} x_{i,j} \le 1$ for all agents $i \in N$, \ie we do {\em 
not} assume that the agents have unit demand.

Each agent $i$ has a type $\theta_i = (v_i,b_i)$ consisting of a valuation 
function $v_i: X_i \rightarrow \reals_{\ge 0}$ and a budget $b_i \in 
\reals_{\ge 0}$. We use $\Theta = \prod_{i=1}^{n} \Theta_i$ for the type 
space. We consider two settings with heterogeneous items, one with multi-
and one with single-dimensional valuations. In the first setting, each agent 
$i \in N$ has a valuation $v_{i,j} \in \reals_{\ge 0}$ for each item $j \in M$ 
and agent $i$'s valuation for allocation $x_i$ is $v_i(x_i) = \sum_{j=1}^{m} 
x_{i,j} v_{i,j}$. In the second setting, which is inspired by sponsored search 
auctions, each agent $i \in N$ has a valuation $v_i \in \reals_{\ge 0}$, each 
item $j \in M$ has a quality $\alpha_j \in \reals_{\ge 0}$, and agent $i$'s 
valuation for allocation $x_i \in X_i$ is $v_{i}(x_i) = \sum_{j=1}^{m} x_{i,j} 
\alpha_j v_i$. For simplicity we will assume that in this setting $\alpha_1 > 
\alpha_2 > \dots > \alpha_m$ and that $v_1 > v_2 > \dots > v_n > 0$.

A (direct revelation) mechanisms $M = (x,p)$ consisting of an allocation rule 
$x: \Theta \rightarrow X$ and a payment rule $p: \Theta \rightarrow \reals^n$ 
is deployed to compute an outcome $(x,p)$ consisting of an allocation $x\in X$ 
and payments $p \in \reals^n.$ 
We say that a mechanism is deterministic if the computation of $(x,p)$ is 
deterministic, and it is randomized if the computation of $(x,p)$ is randomized.

We assume that the agents are utility maximizers and as such need not report 
their types truthfully. We consider settings in which both the valuations and 
budgets are private and settings in which only the valuations are private and 
the budgets are public. When the valuations resp.~budgets are private, then the 
other agents have no knowledge about them, not even about their distribution. 
In the former setting a report by agent $i \in N$ with true type $\theta_i = (v_i,
b_i)$ can be any type $\theta'_i=(v'_i,b'_i).$ In the latter setting agent $i 
\in N$ is restricted to reports of the form $\theta'_i = (v'_i,b_i).$ In both 
settings, if mechanism $M=(x,p)$ is used to compute an outcome for reported 
types $\theta' = (\theta'_1, \dots, \theta'_n)$ and the true types are $\theta = 
(\theta_1, \dots, \theta_n)$ then the utility of agent $i \in N$ is 
\begin{align*}
  u_i(x_i(\theta'), p_i (\theta' ),\theta_i) = 
   \begin{cases}
     v_{i}(x_i(\theta'))-p_i(\theta') & \text{if $p_i(\theta') \le b_i$, and}\\
     - \infty                         & \text{otherwise.}
   \end{cases}
\end{align*}

For deterministic mechanisms and their outcomes we are interested in the 
following properties:

(a) {\em Individual rationality (IR)}: A mechanism is IR if it always produces
    an IR outcome. An outcome $(x,p)$ for types $\theta = (v,b)$ is IR if it 
    is (i) {\em agent rational:} $u_i(x_i,p_i,\theta_i) \ge 0$ for all agents 
    $i \in N$ and (ii) {\em auctioneer rational:} $\sum_{i=1}^{n} p_i \ge 0$.
(b) {\em Pareto optimality (PO)}: A mechanism is PO if it always produces a PO
    outcome. An outcome $(x,p)$ for types $\theta = (v,b)$ is PO if there is no 
    other outcome $(x',p')$ such that $u_i(x'_i,p'_i,\theta_i) \ge u_i(x_i,p_i,
    \theta_i)$ for all agents $i \in N$ and $\sum_{i=1}^{n} p'_i \ge 
    \sum_{i=1}^{n} p_i$, with at least one of the inequalities strict.\footnote{%
    Both IR and PO are defined with respect to the reported types, and are 
    satisfied with respect to the true types only if the mechanism also satisfies IC.}
(c) {\em No positive transfers (NPT)}: A mechanism satisfies NPT if it always 
    produces an NPT outcome. An outcome $(x,p)$ satisfies NPT if $p_i \ge 0$ 
    for all agents $i \in N.$
(d) {\em Incentive compatibility (IC)}: A mechanism satisfies IC if for all 
    agents $i \in N$, all true types $\theta$, and all reported types $\theta'$ 
    we have $u_i(x_i(\theta_i,\theta'_{-i}), p_i(\theta_i,\theta'_{-i}),
    \theta_i) \ge u_i(x_i(\theta'_i, \theta'_{-i}), p_i(\theta'_i,\theta'_{-i}),
    \theta_i).$

If a randomized mechanism satisfies any of these conditions in expectation, 
then we say that the respective property is satisfied {\em ex interim}. If
it satisfies any of these properties for all outcomes it produces, then we
say that it satisfies the respective property {\em ex post}.

% % % % % % % % % % % % % % % % % % % % % % % % % % % % % % % % % % % % % % % %

%% section: multi-dimensional valuations
\section{Multi-Dimensional Valuations}\label{sec:multidim}
In this section we obtain a partial characterization of mechanisms that are IC 
with public budgets by generalizing the ``weak monotonicity'' condition of 
\cite{BikhchandaniEtAl06} from settings without budgets to settings with budgets. 
We use this partial characterization together with a sophisticated way of 
lying, in which an agent understates his valuation for some item and overstates
his valuation for another item, to prove that there can be no {\em deterministic} 
mechanism for {\em divisible} items that is IR, PO, and IC with public budgets. 
Afterwards, we use this result to show that there can be no {\em randomized} 
mechanism for either {\em divisible} or {\em indivisible} items that is IR ex 
interim, PO ex interim, and IC ex interim for public budgets.

%% partial characterization of IC
\paragraph{Partial Characterization of IC.}
For settings {\em without budgets} every mechanism that is incentive compatible
must satisfy what is known as {\em weak monotonicity (WMON)}, namely if $x'_i$ and 
$x_i$ are the assignments of agent $i$ for reports $v'_i$ and $v_i$, then 
the difference in the valuations for the two assignments must be at least as 
large under $v'_i$ as under $v_i$, i.e., $v'_i(x_i(\theta'_i, \theta_{-i})) - 
v'_i(x_i(\theta_i,\theta_{-i})) \ge v_i(x_i(\theta'_i, \theta_{-i})) - 
v_i(x_i(\theta_i,\theta_{-i})).$ We show that this is also true for 
mechanisms that respect the publicly known budget limits.\footnote{Without 
this restriction we could charge $p_i > b_i$ from all agents $i \in N$ to be IC. 
This restriction is satisfied by IR mechanisms to which we will apply this result.}

\begin{proposition} \label{prop:char-multi-dim}
If a mechanism $M=(x,p)$ for multi-dimensional valuations and either divisible 
or indivisible items that respects the publicly known budget limits is IC, then 
it satisfies WMON.
\end{proposition}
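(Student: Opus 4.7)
The plan is to carry over the classical proof of weak monotonicity from the setting without budgets, and to use the hypothesis that the mechanism respects the publicly known budgets to rule out the degenerate case where a utility becomes $-\infty$.

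Concretely, I would fix an agent $i$, fix the other agents' reports $\theta_{-i}$, and consider two possible reports for agent $i$ of the form $\theta_i = (v_i, b_i)$ and $\theta'_i = (v'_i, b_i)$. Since budgets are public, agent $i$ must report his true budget in both scenarios, so both reports share the same budget coordinate $b_i$. Because the mechanism respects the publicly known budget limits, we have $p_i(\theta_i, \theta_{-i}) \le b_i$ and $p_i(\theta'_i, \theta_{-i}) \le b_i$, so in both scenarios the corresponding utility (for any true valuation combined with budget $b_i$) equals the finite quasi-linear expression $v(\cdot) - p(\cdot)$ rather than $-\infty$.

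Now I would apply IC twice. First, treating $\theta_i = (v_i, b_i)$ as the true type, IC says reporting $\theta'_i$ cannot be strictly better, giving
\begin{equation*}
v_i(x_i(\theta_i,\theta_{-i})) - p_i(\theta_i,\theta_{-i}) \;\ge\; v_i(x_i(\theta'_i,\theta_{-i})) - p_i(\theta'_i,\theta_{-i}).
\end{equation*}
Second, treating $\theta'_i = (v'_i, b_i)$ as the true type, IC similarly gives
\begin{equation*}
v'_i(x_i(\theta'_i,\theta_{-i})) - p_i(\theta'_i,\theta_{-i}) \;\ge\; v'_i(x_i(\theta_i,\theta_{-i})) - p_i(\theta_i,\theta_{-i}).
\end{equation*}
Adding the two inequalities cancels the payments and yields exactly the WMON inequality stated in the paragraph preceding the proposition.

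There is really no hard step here; the only subtlety is the one that the footnote flags, namely that without the "respects the publicly known budget limits" hypothesis one of the two deviations could trigger a $-\infty$ utility and IC would become vacuous in that direction, breaking the symmetric addition above. Once that hypothesis is invoked to keep both utilities finite, the proof reduces to the textbook WMON argument unchanged by the presence of budgets.
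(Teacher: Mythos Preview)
Your proof is correct and follows essentially the same approach as the paper: apply IC in both directions between $(v_i,b_i)$ and $(v'_i,b_i)$ to obtain two quasi-linear inequalities, then add them to cancel the payments and obtain WMON. Your explicit remark that the ``respects publicly known budget limits'' hypothesis is what keeps both utilities finite (so that the IC inequalities are genuine rather than vacuous) is a point the paper's proof leaves implicit, but otherwise the arguments are identical.
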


%% deterministic mechanisms for divisible items
\paragraph{Deterministic Mechanisms for Divisible Items.}
We prove the impossibility result by analyzing a setting with two agents and two 
items. This restriction is without loss of generality 
as the impossibility result for an arbitrary number of agents $n > 2$ and an 
arbitrary number of items $m > 2$ follows by setting $v_{i,j} = 0$ if $i > 2$ 
or $j > 2$. In our impossibility proof agent~2 is not budget restricted (i.e., 
$b_2 > v_{2,1} + v_{2,2}$). Agents can lie when they report their valuations, 
and it is not sufficient to study a single input to prove the impossibility.  
Hence, we study the outcome for three related cases, namely Case~1 where  
$v_{1,1} < v_{2,1}$ and $v_{1,2} < v_{2,2}$; Case~2 where  $v_{1,1} > v_{2,1}$, 
$v_{1,2} < v_{2,2}$, and $b_1>v_{1,1}$; and Case~3 where  $v_{1,1} > v_{2,1}$, 
$v_{1,2} > v_{2,2}$, and additionally, \icd, \ice, and \icf. We give a partial 
characterization of those cases, which allows us to analyze the rational 
behavior of the agents. 

Case~1 is easy: Agent 2 is not budget restricted and has the highest valuations 
for both items; so he will get both items. Thus the utility for agent 1 is zero. 
Based on this observation Case~2 can be analyzed: Agent 1 has the higher 
valuation for item 1, while agent 2 has the higher valuation for item 2. Thus, 
agent 1 gets item 1 and agent 2 gets item 2. Since the only difference to Case~1 
is that in Case~2 $v_{1,1} > v_{2,1}$ while in Case~1 $v_{1,1} < v_{2,1}$, the 
critical value whether agent 2 gets item 1 or not is $v_{2,1}$. Thus, in every 
IC mechanism, agent 1 has to pay $v_{2,1}$ and has utility $v_{1,1} - v_{2,1}$. 
The details of these proofs can be found in Appendix~\ref{app:C1C2}.
Using these observations we are able to exactly characterize the 
allocation produced in Case~3 as follows: In Case~3 agent~1 has a higher 
valuation than agent~2 for both items, but
he does not have enough budget to pay for both fully. 
First we show that if agent~1 does not spend his whole budget ($p_1 < 
b_1$) he must fully receive both items (specifically $x_{1,2}=1$), since if 
not, he would buy more of them. Additionally, even if he spent his budget fully 
(i.e., $p_1 = b_1$) his utility $u_i$, which equals $x_{1,1} v_{1,1} + 
x_{1,2} v_{1,2}  - b_1$, must be non-negative. Since $b_1 > v_{1,1}$ this 
implies that $x_{1,1}$ must be 1, i.e., he must receive item 1 fully, and 
$x_{1,2}$ must be non-zero.

%% lemma: mult-case3a
\begin{lemma}\label{lem:mult-case3a}
Given \icbp, \iccp, \icd, and \ice, if $p_{1}<b_{1}$ 
then $x_{1,1}=1$ and $x_{1,2}=1$, else if $p_{1}=b_1$ then $x_{1,1}=1$ and 
$x_{1,2}>0$, in every IR and \POall outcome.
\end{lemma}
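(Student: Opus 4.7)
The plan is to prove both parts by contradiction: whenever the claimed allocation property fails, I will exhibit a Pareto-improving alternative outcome, violating \POall. The slack $b_1 - p_1$ will drive the case $p_1 < b_1$, while the hypothesis $v_{1,1}v_{2,2} > v_{1,2}v_{2,1}$ will enable a budget-neutral swap for $p_1 = b_1$. In all constructions I will use the footnote that alternative outcomes witnessing failure of PO must only be IR, so individual payments may become negative provided $\sum_i p_i$ does not decrease.

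For $p_1 < b_1$, I fix $j \in \{1,2\}$ and suppose $x_{1,j} < 1$. Either some of item~$j$ is unallocated, in which case I give agent~1 an extra $\epsilon$ fraction of item~$j$ at unchanged payments---agent~1 strictly gains $\epsilon v_{1,j} > 0$ (since $v_{1,j} > v_{2,j} \ge 0$), no one else is affected, and IR is preserved---or agent~2 holds $x_{2,j} > 0$, in which case I transfer $\epsilon \le x_{2,j}$ of item~$j$ from agent~2 to agent~1 while shifting a payment of $\epsilon v_{2,j}$ from agent~1 to agent~2. For $\epsilon$ small enough, $p_1 + \epsilon v_{2,j} \le b_1$ by the slack; agent~2's utility is unchanged, the auctioneer's revenue is preserved, and agent~1 gains $\epsilon(v_{1,j} - v_{2,j}) > 0$. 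Both sub-cases yield strict Pareto improvements, contradicting \POall, so $x_{1,j}=1$.

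For $p_1 = b_1$, I first observe that IR alone forces $x_{1,2} > 0$: $x_{1,1}v_{1,1} + x_{1,2}v_{1,2} \ge b_1$ combined with $x_{1,1} \le 1$ and $b_1 > v_{1,1}$ gives $x_{1,2}v_{1,2} > 0$. To show $x_{1,1} = 1$, I assume $x_{1,1} < 1$. If some of item~1 is unallocated, the ``free-give'' move from the previous paragraph still applies. Otherwise $x_{2,1} > 0$, and I perform a budget-neutral two-item swap: transfer $\epsilon \le x_{2,1}$ of item~1 from agent~2 to agent~1 and $\delta := \epsilon v_{2,1}/v_{2,2}$ of item~2 from agent~1 to agent~2, leaving all payments fixed. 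Here $v_{2,2} > 0$ (otherwise $v_{1,1}v_{2,2} = 0 \not> v_{1,2}v_{2,1}$ would contradict the hypothesis), and $\delta \le x_{1,2}$ for small $\epsilon$ because $x_{1,2} > 0$. Agent~2's valuation change is $-\epsilon v_{2,1} + \delta v_{2,2} = 0$, so his utility is unchanged; agent~1's valuation change is $\epsilon v_{1,1} - \delta v_{1,2} = \epsilon(v_{1,1}v_{2,2} - v_{1,2}v_{2,1})/v_{2,2} > 0$. With payments unchanged this is a strict Pareto improvement, again contradicting \POall.

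The main obstacle will be precisely the saturated-budget case $p_1 = b_1$: one cannot bill agent~1 more to absorb additional mass, so the one-sided transfer argument of the first case is unavailable, and the two-sided swap exploiting $v_{1,1}v_{2,2} > v_{1,2}v_{2,1}$ is essential. A secondary subtlety is verifying feasibility of the swap (keeping $\delta \le x_{1,2}$), which is why $x_{1,2} > 0$ must be established from IR before deploying the swap.
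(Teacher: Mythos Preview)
Your proof is correct and follows essentially the same approach as the paper: for $p_1<b_1$ you buy fractions from agent~2 at price $v_{2,j}$, and for $p_1=b_1$ you first use IR to force $x_{1,2}>0$ and then perform a budget-neutral item swap exploiting $v_{1,1}v_{2,2}>v_{1,2}v_{2,1}$. The one organizational difference is that the paper first proves the quantitative bound $x_{1,2}>(1-x_{1,1})v_{2,1}/v_{2,2}$ so as to carry out the \emph{full} swap of all of $x_{2,1}$ at once, whereas you only need $x_{1,2}>0$ because you take $\epsilon$ small and swap infinitesimally; your route is marginally simpler since it skips that intermediate bound.
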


Then we show that actually $x_{1,2} < 1$, which, combined with the previous 
lemma, implies that $p_1 = b_1$. The fact that $x_{1,2} < 1$, i.e, that 
agent 1 does not fully get item 1 {\em and} 2 is not surprising since he 
does not have enough budget to outbid agent 2 on both items as $b_1 < 
v_{2,1} + v_{2,2}$. However, we are even able to determine the exact value of 
$x_{1,2}$, which is $(b_1 - v_{2,1}) / v_{2,2}$.

%% lemma: mult-case3b
\begin{lemma}\label{lem:mult-case3b}
Given \ica, \icbp, \iccp, \icd, \ice, and \icf, then 
$p_1=b_1$ and $x_{1,2}=(b_1 - v_{2,1})/v_{2,2} <1$ in every IR and \POall 
outcome selected by an IC mechanism.
\end{lemma}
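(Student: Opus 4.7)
The plan is in three steps: (i) rule out $x_{1,2}=1$ so that \lemref{lem:mult-case3a} yields $p_1=b_1$; (ii) show that $\beta:=x_{1,2}$ is the same constant for every Case~3 input; (iii) bracket $\beta$ by $(b_1-v_{2,1})/v_{2,2}$ from both sides via IC deviations between Case~2 and Case~3.

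For (i), I argue by contradiction. Suppose $x_{1,2}=1$, so by \lemref{lem:mult-case3a} $x_1=(1,1)$ with some $p_1\le b_1$. Consider a Case~2 report $(v_{1,1},\vtilde_{1,2})$ keeping the same $v_{1,1}$ (which preserves \icbp{} and \icd) and taking $\vtilde_{1,2}<v_{2,2}$. By the Case~2 analysis its truthful outcome is $(1,0)$ with payment $v_{2,1}$, giving utility $v_{1,1}-v_{2,1}$; lying to the assumed Case~3 outcome yields $v_{1,1}+\vtilde_{1,2}-p_1$. IC forces $p_1\ge \vtilde_{1,2}+v_{2,1}$ for every $\vtilde_{1,2}<v_{2,2}$, so letting $\vtilde_{1,2}\uparrow v_{2,2}$ gives $p_1\ge v_{2,1}+v_{2,2}>b_1$ by \icf, contradicting IR. Hence $x_{1,2}<1$, and \lemref{lem:mult-case3a} forces $p_1=b_1$.

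For (ii), by (i) every Case~3 input yields agent~1 an allocation of the form $(1,\cdot)$ and payment $b_1$. Comparing two such inputs with fractions $\beta$ and $\beta'$, two-way IC (each agent considering imitating the other) reduces, after cancelling the $v_{1,1}-b_1$ terms, to $v_{1,2}(\beta-\beta')\ge 0$ and $v_{1,2}'(\beta'-\beta)\ge 0$, forcing $\beta=\beta'$. Hence $\beta$ depends only on $(v_{2,1},v_{2,2},b_1)$.

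For (iii), rerunning the Case~2-to-Case~3 deviation of (i) without the extra assumption $x_{1,2}=1$ gives $\beta\le (b_1-v_{2,1})/\vtilde_{1,2}$ for every $\vtilde_{1,2}<v_{2,2}$, hence $\beta\le (b_1-v_{2,1})/v_{2,2}$. For the matching lower bound, choose a Case~3 input with $v_{1,1}$ unchanged and $\vtilde_{1,2}$ just above $v_{2,2}$ (all Case~3 conditions, including \ice, remain satisfied in a right-neighbourhood of $v_{2,2}$); by (ii) its $x_{1,2}$ is still $\beta$, so the truthful utility is $v_{1,1}+\vtilde_{1,2}\beta-b_1$, while deviating to a Case~2 report yields $v_{1,1}-v_{2,1}$. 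IC gives $\beta\ge (b_1-v_{2,1})/\vtilde_{1,2}$, and letting $\vtilde_{1,2}\downarrow v_{2,2}$ yields $\beta\ge (b_1-v_{2,1})/v_{2,2}$. Combining, $\beta=(b_1-v_{2,1})/v_{2,2}$, which is $<1$ by \icf. The main obstacle is the constancy claim (ii): it is what lets one-sided IC bounds at distinct values of $\vtilde_{1,2}$ combine into a tight equation for $\beta$, and it hinges on having first run (i) uniformly for every Case~3 input.
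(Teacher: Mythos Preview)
Your proof is correct and follows essentially the same route as the paper: both start by ruling out $x_{1,2}=1$ via an IC deviation from a Case~2 type (your step~(i), the paper's part~(a)), both obtain the upper bound $x_{1,2}\le (b_1-v_{2,1})/v_{2,2}$ by letting a Case~2 agent contemplate a Case~3 report (your step~(iii) upper bound, the paper's part~(b)), and both close the lower bound by moving to a Case~3 valuation $\vtilde_{1,2}$ just above $v_{2,2}$.

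The one organizational difference is your step~(ii): you observe up front, via two-sided IC between two Case~3 reports, that $x_{1,2}$ is \emph{constant} in $v_{1,2}$ (since payment is $b_1$ and allocation is $(1,\beta)$ in both, the two IC inequalities collapse to $\beta=\beta'$). The paper does not isolate this; it instead first records the weaker lower bound $x_{1,2}\ge (b_1-v_{2,1})/v_{1,2}$ and then, in its part~(c), uses only the one-sided inequality $x_{1,2}\ge x'_{1,2}$ together with that bound at $v'_{1,2}=v_{2,2}+\delta$ to reach a contradiction. Your constancy observation is a clean shortcut that replaces the paper's slightly more involved contradiction argument. One minor overstatement: your step~(ii) only shows independence from $v_{1,2}$ (you keep $v_{1,1}$ fixed), so the remark that $\beta$ ``depends only on $(v_{2,1},v_{2,2},b_1)$'' claims more than you proved---but this is harmless, since $v_{1,1}$ is never varied in step~(iii).
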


We combine these characterizations of Case~3 with (a) the WMON property shown in
\propref{prop:char-multi-dim} and (b) a sophisticated way of the agent to lie: 
He {\em overstates} his value for
item 1 by a value $\alpha$ and {\em understates} his value for item 2 by a 
value $0< \beta < \alpha$, but by
such small values that Case~3 continues to hold.
Thus, by Lemma~\ref{lem:mult-case3a} $x_{2,1}$ remains 0 (whether the agent 
lies or does not), and thus, the
WMON condition implies that $x_{2,2}$ does {\em not} increase. 
However, by the dependence of $x_{1,2}$ on $v_{2,1}$ and $v_{2,2}$ shown in 
Lemma~\ref{lem:mult-case3b}, $x_{1,2}$, and thus also $x_{2,2}$ changes when 
agent 2 lies. This gives a contradiction to the assumption that such a 
mechanism exists.

\begin{theorem}\label{thm:impossibility-mult-dim}
There is no deterministic IC mechanism for divisible items which selects for any 
given input with public budgets an IR and \POall outcome.
\label{thm:multidim-impossibility}
\end{theorem}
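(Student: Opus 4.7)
The plan is to derive a contradiction by exhibiting an instance of Case~3 on which any IR, \POall, and IC mechanism is forced by \propref{prop:char-multi-dim} to violate the exact allocation formula pinned down by \lemref{lem:mult-case3b}. I will work with two agents and two items satisfying \ica, \icbp, \iccp, \icd, \ice, and \icf. By \lemref{lem:mult-case3a} and \lemref{lem:mult-case3b}, every IR and \POall outcome produced by an IC mechanism on this instance must then satisfy $x_{1,1}=1$, $x_{2,1}=0$, $p_1=b_1$, and $x_{1,2}=(b_1-v_{2,1})/v_{2,2}$; since agent~2 is not budget constrained and has positive value for item~2, \POall additionally forces full allocation of item~2, so $x_{2,2}=1-(b_1-v_{2,1})/v_{2,2}$.

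Next I will let agent~2 misreport $v'_{2,1}=v_{2,1}+\alpha$ and $v'_{2,2}=v_{2,2}-\beta$ with $0<\beta<\alpha$ chosen small enough that the same six inequalities hold with $v'_2$ in place of $v_2$. The two lemmas then apply verbatim to the perturbed instance, so the mechanism must respond with an allocation $x'$ in which $x'_{2,1}=0$ and $x'_{2,2}=1-(b_1-v_{2,1}-\alpha)/(v_{2,2}-\beta)$.

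A short algebraic computation shows that $x'_{2,2}-x_{2,2}$ has the same sign as $\alpha v_{2,2}-(b_1-v_{2,1})\beta$. Since $(b_1-v_{2,1})/v_{2,2}=x_{1,2}<1$ by \lemref{lem:mult-case3b}, any choice with $\alpha/\beta>1$, in particular any $\alpha>\beta$, already makes this strictly positive, so the misreport strictly increases agent~2's allocation of item~2. On the other hand, because $x_{2,1}=x'_{2,1}=0$, the WMON condition of \propref{prop:char-multi-dim} applied to agent~2 collapses to $(v'_{2,2}-v_{2,2})(x'_{2,2}-x_{2,2})\ge 0$, and $v'_{2,2}-v_{2,2}=-\beta<0$ then forces $x'_{2,2}\le x_{2,2}$, contradicting the strict inequality above.

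The main obstacle I expect is not the final WMON computation but the bookkeeping of the second step: verifying that the perturbation keeps the instance strictly inside Case~3 so that \lemref{lem:mult-case3b} still pins $x'_{1,2}$ down to an explicit formula. Because all six defining inequalities are strict and continuous in the valuations, this is handled by choosing $\alpha$ (and hence $\beta$) sufficiently small; once that is done, the contradiction reduces to the one-line sign comparison above.
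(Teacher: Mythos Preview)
Your proof is correct and follows essentially the same approach as the paper: both use \lemref{lem:mult-case3a} and \lemref{lem:mult-case3b} to pin down $x_{2,1}=0$ and $x_{2,2}=1-(b_1-v_{2,1})/v_{2,2}$, perturb agent~2's report by $(+\alpha,-\beta)$ with $0<\beta<\alpha$ small, and derive a contradiction between WMON (which forces $x'_{2,2}\le x_{2,2}$) and the explicit formula (which forces $x'_{2,2}>x_{2,2}$). The only cosmetic difference is that the paper rewrites the inequality $x_{2,2}\ge x'_{2,2}$ as $b_1\ge (v_{2,1}\beta+v_{2,2}\alpha)/\beta>v_{2,1}+v_{2,2}$ to clash with \icf, whereas you compare the allocations directly; the two computations are equivalent.
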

\begin{proof}
Let us assume by contradiction that such a mechanism exists and consider an input 
for which \ica, \icbp, \iccp, \icd, \ice, and \icf\ holds. 
Such an input exists,
for example $v_{1,1} = 4$, $v_{1,2} = 5$, $v_{2,1} = 3$, and $v_{2,2} = 4$ with budgets
$b_1 = 5$ and $b_2 = 8$ would be such an input.
Lemma~\ref{lem:mult-case3a} and~\ref{lem:mult-case3b} 
imply that $x_{1,1}=1$, $x_{2,1}=0$, $x_{1,2}=\frac{b_1-v_{2,1}}{v_{2,2}}$, 
$x_{2,2}=1-x_{1,2}$, and $p_1 = b_1$.
Let us consider an alternative valuation by agent~2. We define $v'_{2,1}=
v_{2,1}+\alpha$ and $v'_{2,2}=v_{2,2}-\beta$ for arbitrary $\alpha, \beta>0$ and 
$\alpha > \beta$ which are sufficiently small such that $v_{1,1} v_{2,2}' > 
v_{1,2}v_{2,1}'$ holds. By 
\propref{prop:char-multi-dim}, IC implies WMON, and therefore, $x'_{2,2} 
v'_{2,2} - x_{2,2} v'_{2,2} \geq x'_{2,2} v_{2,2} - x_{2,2} v_{2,2}$. It follows 
that $x_{2,2}\geq x'_{2,2}$, and by Lemma~\ref{lem:mult-case3b}, 
$\frac{b_1-v_{2,1}}{v_{2,2}}   \leq \frac{b_1 -v'_{2,1}}{v'_{2,2}}$. Hence, the 
budget of agent~1 has to be large enough, such that $b_1\geq\frac{v_{2,2} 
v'_{2,1}-v_{2,1} v'_{2,2}}{v_{2,2}-v'_{2,2}}=\frac{v_{2,1}\beta+
v_{2,2}\alpha}{\beta}>v_{2,1}+v_{2,2}$, but $b_1<v_{2,1}+v_{2,2}$ holds by 
assumption. Contradiction!%\qed
\end{proof}

%% randomized mechanisms for divisible and indivisible items
\paragraph{Randomized Mechanisms for Divisible and Indivisible Items.}
We exploit the fact that randomized mechanisms for both divisible and indivisible 
items are essentially equivalent to deterministic mechanisms for divisible items. 

We show that for agents with budget constraints every randomized mechanism 
$\Mbar=(\bar{x},\bar{p})$ for divisible or indivisible items can be mapped 
bidirectionally to a deterministic mechanism $M=(x,p)$ for divisible items with 
identical expected utility for all the agents and the auctioneer when the same 
reported types are used as input. To turn a randomized mechanism for 
{\em indivisible} items into a deterministic mechanism for {\em divisible} 
items simply compute the expected values of $p_i$ and 
$x_{i,j}$ for all $i$ and $j$ and return them. To turn a deterministic mechanism
for {\em divisible} items into a randomized mechanism for {\em indivisible} 
items simply pick values with probability $x_{i,j}$ and keep the same payment 
as the deterministic mechanism.

\begin{proposition}\label{prop:divisible=randomized}
Every randomized mechanism $\Mbar=(\bar{x},\bar{p})$ for agents with finite 
budgets, a rational auctioneer, and a limited amount of divisible 
or indivisible items can be mapped bidirectionally to a deterministic mechanism 
$M=(x,p)$ for divisible items such that $u_i(x_i(\theta'), p_i(\theta'),\theta_i) 
= \E[u_i(\xbar_i(\theta'),\pbar_i(\theta'),\theta_i)]$ and $\sum_{i\in N} 
p_i(\theta')=\E[\sum_{i\in N} \pbar_i(\theta')]$ for all agents $i$, all true
types $\theta=(v,b)$, and reported types $\theta'=(v',b')$. 
\end{proposition}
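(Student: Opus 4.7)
The plan is to construct the mappings in both directions explicitly and verify the claimed identities via linearity of expectation, exploiting additivity of the valuation functions.

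For the forward direction, given a randomized mechanism $\Mbar=(\xbar,\pbar)$ for divisible or indivisible items, I will define the deterministic divisible mechanism $M=(x,p)$ by $x_{i,j}(\theta'):=\E[\xbar_{i,j}(\theta')]$ and $p_i(\theta'):=\E[\pbar_i(\theta')]$ for every reported profile $\theta'$. Feasibility in the divisible setting is immediate, since $\xbar_{i,j}\in[0,1]$ and $\sum_i \xbar_{i,j}\le 1$ hold pointwise and carry over to their expectations. Because valuations are additive, $v_i(\xbar_i(\theta'))=\sum_j \xbar_{i,j}(\theta')v_{i,j}$ is linear in $\xbar_i$, so $\E[v_i(\xbar_i(\theta'))]=v_i(x_i(\theta'))$; together with $p_i(\theta')=\E[\pbar_i(\theta')]$ this yields the per-agent expected utility identity, and summing gives the expected revenue identity.

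For the backward direction, given a deterministic $M=(x,p)$ for divisible items, I will construct a randomized indivisible mechanism $\Mbar$ by independently drawing, for each item $j$, its (at most one) recipient from the distribution that picks agent $i$ with probability $x_{i,j}(\theta')$ and nobody with probability $1-\sum_i x_{i,j}(\theta')$, while keeping payments deterministic, $\pbar_i:=p_i$. This is a feasible indivisible allocation whose marginals satisfy $\E[\xbar_{i,j}]=x_{i,j}$, so the same linearity argument matches expected utilities and expected revenue. A randomized \emph{divisible} mechanism can in turn be obtained by viewing $M$ itself as a degenerate random mechanism.

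The main obstacle is the $-\infty$ branch in the utility definition. If $\pbar_i$ exceeds the reported budget on an event of positive probability, then $\E[u_i]=-\infty$, and so the deterministic side must also evaluate to $-\infty$. In the forward direction, the hypothesis of finite budgets restricts attention to reports that do not drive any agent to $-\infty$ utility, so $\pbar_i\le b_i$ holds almost surely and hence $p_i=\E[\pbar_i]\le b_i$, making both utilities finite and equal. In the backward direction, $\pbar_i=p_i$ is deterministic, so $\pbar_i\le b_i$ is equivalent to $p_i\le b_i$, and the two sides are $-\infty$ or finite simultaneously. With this case distinction handled, everything else reduces to linearity of expectation and a routine feasibility check.
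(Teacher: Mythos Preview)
Your proposal is correct and follows essentially the same approach as the paper: take expectations to go from $\Mbar$ to $M$, and randomly assign each item according to the fractional allocation (keeping the same payments) to go from $M$ to $\Mbar$, with the utility identity following from linearity of expectation applied to the additive valuations. You are more explicit than the paper about feasibility and about the $-\infty$ branch of the utility; the paper handles the latter only implicitly by remarking that the expectations exist because feasible payments are bounded above and below (i.e., it tacitly assumes $\pbar_i\le b_i$ almost surely, which is what you also end up assuming).
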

\begin{proof}
Let us map $\Mbar=(\bar{x},\bar{p})$ to %the deterministic mechanism 
$M=(x,p)$ that assigns for each agent $i \in N$ and item $j \in M$ a fraction of 
$\E[\xbar_{i,j}]$ of item $j$ to agent $i$, and makes each agent $i \in N$ pay 
$\E[\pbar_i]$. The expectations exist since the feasible fractions of items and 
the feasible payments have an upper bound and a lower bound. For the other 
direction, we map $M=(x,p)$ to %the randomized mechanism 
$\Mbar=(\xbar,\pbar)$ that randomly picks for each item $j \in M$ an 
agent $i \in N$ to which it assigns item $j$ in a way such that agent $i$ is 
picked with probability $x_{i,j}$, and makes each agent $i \in N$ pay $p_i$. 
Since $x = \E[\xbar]$ and $p=\E[\pbar]$, $\sum_{j\in M} (x_{i,j}v_{i,j}) - p_i = 
\E[\sum_{j\in M} (\xbar_{i,j} v_{i,j}) - \pbar_i]$ for all $i\in N$ and 
$\sum_{i\in N} p_i=\E[\sum_{i\in N} \pbar_i]$.
\end{proof}

This proposition implies the non-existence of randomized mechanisms stated in 
Theorem~\ref{thm:randpoweak}.

%% theorem: randPOall
\begin{theorem}\label{thm:randpoweak}
There can be no randomized mechanism for divisible or indivisible 
items that is IR ex interim, PO ex interim, and
IC ex interim, and that satisfies the public budget constraint ex post.
\end{theorem}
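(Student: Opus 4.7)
The plan is to reduce to the deterministic impossibility of Theorem~\ref{thm:impossibility-mult-dim} via the equivalence in Proposition~\ref{prop:divisible=randomized}. Suppose for contradiction that a randomized mechanism $\Mbar=(\xbar,\pbar)$ for divisible or indivisible items exists that is IR ex interim, PO ex interim, and IC ex interim, and that respects the public budget cap for every realization. Apply Proposition~\ref{prop:divisible=randomized} to obtain a deterministic mechanism $M=(x,p)$ for divisible items with $u_i(x_i(\theta'),p_i(\theta'),\theta_i)=\E[u_i(\xbar_i(\theta'),\pbar_i(\theta'),\theta_i)]$ for every agent $i$, all true types $\theta$, and all reported types $\theta'$, and with $\sum_i p_i(\theta')=\E[\sum_i \pbar_i(\theta')]$. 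I then aim to show that $M$ inherits IR, PO, and IC for divisible items with public budgets, contradicting Theorem~\ref{thm:impossibility-mult-dim}.

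First I would handle the budget/IR step. Because $\Mbar$ respects the budget cap ex post, $\pbar_i(\theta')\le b_i$ for every realization, hence $p_i(\theta')=\E[\pbar_i(\theta')]\le b_i$, so $u_i$ for $M$ is finite (never $-\infty$) and coincides with the expected $\bar{u}_i$. Agent rationality of $M$ then follows immediately from agent rationality of $\Mbar$ ex interim, and auctioneer rationality from $\sum_i p_i=\E[\sum_i \pbar_i]\ge 0$. IC is also immediate: for every $i$, $\theta$, $\theta'$, $u_i(x_i(\theta_i,\theta_{-i}'),p_i(\cdot),\theta_i)=\E[\bar{u}_i(\xbar_i(\theta_i,\theta_{-i}'),\pbar_i(\cdot),\theta_i)]\ge \E[\bar{u}_i(\xbar_i(\theta_i',\theta_{-i}'),\pbar_i(\cdot),\theta_i)]=u_i(x_i(\theta_i',\theta_{-i}'),p_i(\cdot),\theta_i)$, by IC ex interim of $\Mbar$.

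The only slightly subtle step is PO. I would argue the contrapositive: if $M$'s outcome for some reported types were Pareto-dominated by some other deterministic divisible outcome $(x'',p'')$, then $(x'',p'')$ is itself a (degenerate) randomized outcome. The corresponding expected utilities and expected total payment equal the deterministic ones, so viewed as an alternative to $\Mbar$'s outcome it would deliver weakly higher expected utility to every agent and weakly higher expected auctioneer revenue, with at least one strict inequality. This contradicts PO ex interim of $\Mbar$. Hence $M$ is PO.

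Combining these, $M$ is a deterministic IR, PO, and IC mechanism for divisible items with public budgets, contradicting Theorem~\ref{thm:impossibility-mult-dim}. The main obstacle is the PO step, since one must be careful that the ex interim PO benchmark for $\Mbar$ indeed encompasses all deterministic alternatives to $M$; the observation that every deterministic divisible outcome is a degenerate randomized outcome with the same expected statistics resolves this cleanly.
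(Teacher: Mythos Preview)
Your proposal is correct and follows exactly the route the paper takes: reduce via Proposition~\ref{prop:divisible=randomized} to a deterministic divisible mechanism and invoke Theorem~\ref{thm:impossibility-mult-dim}. The paper's proof is a two-line citation of these results, whereas you spell out why IR, IC, and PO transfer; the only small imprecision is that in the indivisible case the dominating divisible outcome $(x'',p'')$ is not literally a degenerate randomized \emph{indivisible} outcome, but the reverse direction of Proposition~\ref{prop:divisible=randomized} immediately converts it into one with the same expected utilities, so the argument goes through.
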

\begin{proof}
For a contradiction suppose that there is such a randomized mechanism. Then, by 
\propref{prop:divisible=randomized}, 
there must be a deterministic mechanism for divisible items and public budgets 
that satisfies IR, \POall, and IC. This gives a contradiction to 
\thmref{thm:multidim-impossibility}.%\qed
\end{proof}

% % % % % % % % % % % % % % % % % % % % % % % % % % % % % % % % % % % % % % % %

%% section: single-dimensional valuations
\section{Single-Dimensional Valuations}
In this section we present exact characterizations of PO outcomes and mechanisms 
that are IC with public budgets. We characterize PO by a simpler ``no trade'' 
condition and, similar to Section~\ref{sec:multidim}, we extend the ``classic''
characterization results for IC mechanisms for single-dimensional valuations 
(see, e.g., \cite{Myerson81,ArcherTardos01}) without budgets to settings 
with public budgets. We use these characterizations to show that there can 
be no deterministic mechanism for divisible items that is IR, PO, and IC with 
public budgets. We also present a reduction to the setting with a single (and 
thus homogeneous) item that allows us to apply the following proposition from 
\cite{BhattacharyaEtAl10}. The basic building block of the mechanisms mentioned 
in this proposition is the ``adaptive clinching auction'' for a single divisible 
item. It is described for two agents in \cite{DobzinskiLaviNisan08}, as a 
``continuous time process'' for arbitrarily many agents in 
\cite{BhattacharyaEtAl10}, and as an explicit polynomial-time algorithm for 
arbitrarily many agents in Appendix~\ref{app:clinching-auction}.

\begin{proposition}[\cite{BhattacharyaEtAl10}]
For a single divisible item there exists a deterministic mechanism that 
satisfies IR, NPT, PO, and IC for public budgets. Additionally, for a single 
divisible or indivisible item there exists a randomized mechanism that satisfies 
IR ex interim, NPT ex post, PO ex post, and IC ex interim for private budgets.
\end{proposition}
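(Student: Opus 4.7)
The plan is to invoke the adaptive clinching auction machinery, both in its deterministic continuous-time form and in a randomized extension. For the single divisible item with \emph{public} budgets we run the following process. Start the current price at $p=0$ and mark an agent $i$ \emph{active} while his reported value $v_i$ exceeds $p$. Associate to each active $i$ a residual per-price demand $d_i(p)=(b_i-\pi_i)/p$, where $\pi_i$ is what $i$ has paid so far. Increase $p$ continuously; whenever the total residual demand of agents other than $i$ falls below the remaining supply, that slack is \emph{clinched} by $i$ at the current price and added to his allocation, with $\pi_i$ updated accordingly. The process terminates when the supply is exhausted or no agent remains active. The output $(x,p)$ is the allocation/payment pair so obtained. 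To verify the claimed properties: IR holds because $i$ is only charged while $v_i\ge p$ and by construction $\pi_i\le b_i$; NPT is immediate because all clinching happens at strictly positive prices; PO follows from the characterization of PO via the ``no trade'' condition already invoked in the paper (at termination every unit is held by an agent whose value matches the highest residual ability-to-pay); and IC under public budgets is obtained by checking value monotonicity of $x_i(v_i,v_{-i})$ and the payment identity implicit in the price trajectory, which together are equivalent to IC by the standard Myerson--Archer--Tardos characterization extended to the public-budget setting as in Section~4.

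For a single divisible item with \emph{private} budgets, I would follow the randomized construction of \cite{BhattacharyaEtAl10}. The idea is to randomize a threshold on the reported budget before feeding it into the clinching process, equivalently to run the clinching process against a distribution of ``effective budgets'' derived from the reported $b_i$. The randomization is designed so that the expected marginal change in $i$'s utility with respect to his reported budget vanishes at truth-telling: overstating the budget increases the probability of being charged beyond the true $b_i$ (which costs $-\infty$), while understating it reduces expected allocation in proportion to the expected price saving. The value-IC argument is inherited from the public-budget case conditional on each realization of the budget threshold, so both value and budget are reported truthfully ex interim. Ex-post NPT and ex-post PO persist because \emph{every} realization is an outcome of a clinching run, and ex-interim IR holds because the expected payment stays within the true budget.

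Finally, for a single \emph{indivisible} item, convert the randomized divisible mechanism into a randomized indivisible mechanism via the equivalence of Proposition~\ref{prop:divisible=randomized}: realize the fractional allocation $\bar x_i$ as the probability of awarding the whole item to $i$, while charging the same per-realization payment scaled so that expected payment equals the divisible one. This preserves expected utilities, hence IR ex interim and IC ex interim carry over; NPT ex post and PO ex post hold because each realization gives the entire item to a single agent at a non-negative price no larger than his budget, and no Pareto improvement exists within the set of integer allocations reachable from that realization.

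The main obstacle is the IC analysis under private budgets: one must exhibit a randomization of the clinching dynamics that simultaneously discourages misreporting the value \emph{and} the budget without breaking PO. The trick, due to \cite{BhattacharyaEtAl10}, is to decouple the two dimensions by choosing the budget-threshold distribution so that the expected payment, viewed as a function of the reported budget, is exactly the envelope required by the budget-IC constraint; once this is set up, the remainder of the argument reduces to the public-budget analysis applied pointwise to each realization.
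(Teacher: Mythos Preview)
The paper does not prove this proposition at all: it is quoted verbatim from \cite{BhattacharyaEtAl10} and used as a black box (the only related material in the paper is Appendix~\ref{app:clinching-auction}, which gives a polynomial-time implementation of the clinching process but does not re-establish IR, NPT, PO, or IC). So there is no ``paper's own proof'' to compare against; you are reconstructing the argument of the cited work.

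As a reconstruction, your sketch of the deterministic divisible/public-budget part is fine. The private-budget randomization is gestured at correctly but left at the level of an intuition; that is acceptable for a sketch, though the sentence about ``overstating the budget increases the probability of being charged beyond the true $b_i$'' is not how the construction in \cite{BhattacharyaEtAl10} actually enforces budget-truthfulness.

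There is, however, a real gap in your indivisible-item step. Invoking Proposition~\ref{prop:divisible=randomized} gives you IR ex interim and IC ex interim for free, but it does \emph{not} give you PO ex post. Under the construction of that proposition (assign the item to $i$ with probability $x_i$, charge the deterministic $p_i$), a realization in which a low-value agent wins is Pareto-dominated whenever some higher-value agent still has unspent budget. The reason this does not happen is a \emph{structural property of the clinching output}: if $x_i>0$ then every agent $i'$ with $v_{i'}>v_i$ has $p_{i'}=b_{i'}$. This is exactly the property the paper isolates and uses in the proof of Theorem~\ref{thm:singdim-positive}; without it your sentence ``no Pareto improvement exists within the set of integer allocations reachable from that realization'' is unsupported. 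Your phrase ``scaled so that expected payment equals the divisible one'' is also ambiguous: if you mean charging $p_i/x_i$ to the winner, you may violate the budget constraint; if you mean charging $p_i$ unconditionally (as in Proposition~\ref{prop:divisible=randomized}), say so, and then supply the missing structural argument for PO ex post.
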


%% exact characterization of po and ic
\paragraph{Exact Characterizations of PO and IC.}
%%
%% pareto optimality
We start by characterizing PO outcomes through a simpler ``no trade'' 
condition. 
Outcome $(x,p)$ for single-dimensional valuations and either divisible or 
indivisible items that respects the budget limits satisfies {\em no trade 
(NT)} if %and only if
(a) $\sum_{i \in N} x_{i,j} = 1$ for all $j \in M$, and
(b) there is no $x'$ such that for $\delta_i = \sum_{j \in M} (x'_{i,j} - 
    x_{i,j}) \alpha_j$ for all $i \in N$, $W = \{i \in N \mid \delta_i > 0 
    \}$, and $L = \{i \in N \mid \delta_i \le 0\}$ we have $\sum_{i \in N} 
    \delta_i v_i > 0$ and $\sum_{i \in W} \min(b_i - p_i, \delta_i v_i) + 
    \sum_{i \in L} \delta_i v_i \ge 0$.\footnote{For PO we only need that the 
outcome respects the {\em reported} budget limits. Hence our characterization 
also applies in {\em private} budget settings.}
This definition says that there should be no alternative assignment that 
overall increases the sum of the valuations, and allows the ``winners'' to 
compensate the ``losers''.
It differs from the definitions in prior work in that it allows trades that
involve both items {\em and} money. We will exploit this fact in the proof of our
impossibility result.
%%
%% proposition: po <=> nt
\begin{proposition}\label{prop:po-sing-dim}
Outcome $(x,p)$ for single-dimensional valuations and either divisible or 
indivisible items that respects the budget limits is PO if and 
only if it satisfies NT.
\end{proposition}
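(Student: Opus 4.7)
The plan is to prove the two implications separately, reading condition (b) as an exact ``certificate'' for the existence of a Pareto improvement: an $x'$ witnessing (b) will be accompanied by a payment vector $p'$ that realises the improvement, and every Pareto-dominating $(x',p')$ will in turn produce such a witness. Condition (a) handles the side case where items are left on the table and no money has to change hands at all.

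For the forward direction, if (a) fails I would hand the unallocated part of some item $j$ to any agent (all have $v_i>0$) while leaving payments untouched; that agent's utility strictly grows and nobody else is affected, contradicting PO. If (b) fails via some $x'$, I would pair it with payments
\[
  p'_i = p_i + \min(b_i-p_i,\delta_i v_i)\ \text{for $i\in W$},\qquad p'_i = p_i + \delta_i v_i\ \text{for $i\in L$}.
\]
Budgets are respected by construction, every agent's utility change is $\ge 0$ (equal to $0$ for losers, and non-negative for winners because of the truncation inside the $\min$), the auctioneer's revenue change is exactly the left-hand side of (b) and is therefore $\ge 0$, and the strict inequality $\sum_i\delta_i v_i>0$ forces at least one of these to be strict, yielding a Pareto improvement.

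For the backward direction, I would start from a Pareto-dominating $(x',p')$ and set $\delta_i=\sum_{j\in M}(x'_{i,j}-x_{i,j})\alpha_j$. The inequality $u_i(x'_i,p'_i,\theta_i)\ge u_i(x_i,p_i,\theta_i)$ together with the implicit budget bound $p'_i\le b_i$ yields $p'_i-p_i\le\min(b_i-p_i,\delta_i v_i)$ for $i\in W$ and $p'_i-p_i\le\delta_i v_i$ for $i\in L$; summing over $i$ and invoking the auctioneer's condition $\sum_i(p'_i-p_i)\ge 0$ reproduces the inequality of (b) verbatim. A short case split on whether the strict Pareto inequality is attained by an agent or by the auctioneer then gives $\sum_i\delta_i v_i>0$, so $x'$ witnesses a violation of (b). The degenerate case $\delta\equiv 0$ must be ruled out separately: it forces $p'_i\le p_i$ for all $i$, which combined with $\sum_i p'_i\ge\sum_i p_i$ gives $p'=p$ and precludes any strict improvement.

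The main technical subtlety is the winner/loser asymmetry hidden in the $\min$: a winner can contribute at most its remaining budget $b_i-p_i$ toward compensating losers, so the bookkeeping that trades utility gains against revenue gains must be split across $W$ and $L$. Preserving the strict inequality $\sum_i\delta_i v_i>0$ when translating between the ``utility-minus-payment'' and ``value-change'' viewpoints is where the proof has to be most careful, and it is also what makes NT more delicate than the analogous ``no trade'' conditions for settings without budgets in \cite{DobzinskiLaviNisan08,BhattacharyaEtAl10,FiatEtAl11}.
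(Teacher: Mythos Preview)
Your proposal is correct and follows essentially the same route as the paper's proof: the same payment construction $p'_i = p_i + \min(b_i-p_i,\delta_i v_i)$ for $i\in W$ and $p'_i = p_i + \delta_i v_i$ for $i\in L$ in the forward direction, and the same bound $p'_i - p_i \le \min(b_i-p_i,\delta_i v_i)$ summed against the auctioneer inequality in the backward direction, with the identical case split for strictness. Your separate treatment of the degenerate case $\delta\equiv 0$ is harmless but unnecessary, since the identity $\sum_i(u'_i-u_i)+\sum_i(p'_i-p_i)=\sum_i\delta_i v_i$ together with the Pareto hypothesis already forces $\sum_i\delta_i v_i>0$ and hence $\delta\not\equiv 0$.
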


%% incentive compatibility
Next we characterize mechanisms that are IC with public budgets by ``value
monotonicity'' and ``payment identity''.
Mechanism $M=(x,p)$ for single-dimensional valuations and indivisible items 
that respects the publicly known budgets satisfies {\em value monotonicity (VM)} 
if for all $i \in N$, $\theta_i = (v_i,b_i)$, $\theta'_i = (v'_i,b_i)$, and 
$\theta_{-i} = (v_{-i}, b_{-i})$ we have that $v_i \le v'_i$ implies $\sum_{j 
\in M} x_{i,j}(\theta_i,\theta_{-i})\alpha_j \le \sum_{j \in M} x_{i,j}(\theta'_i,
\theta_{-i})\alpha_j.$ 
Mechanism $M=(x,p)$ for single-dimensional valuations and indivisible items 
that respects the publicly known budgets satisfies {\em payment identity (PI)} 
if for all $i \in N$ and $\theta = (v,b)$ with $c_{\gamma_t} \le v_i \le 
c_{\gamma_{t+1}}$ we have $p_i(\theta) = p_i((0,b_i), \theta_{-i}) + 
\sum_{s = 1}^{t} (\gamma_s -\gamma_{s-1}) c_{\gamma_s}(b_i, \theta_{-i})$, 
where $\gamma_0 < \gamma_1 < \dots$ are the values $\sum_{j \in M} x_{i,j}
\alpha_j$ can take and $c_{\gamma_s}(b_i,\theta_{-i})$ for $1 \le s \le t$ 
are the corresponding critical valuations.
While VM ensures that stating a higher valuation can only lead to a better
allocation, PI gives a formula for the payment in terms of the possible 
allocations and the critical valuations.
In the proof of our impossibility result we will use the fact that the payments 
for worse allocations provide a lower bound on the payments for better allocations.

\begin{proposition}\label{prop:ic-sing-dim}
Mechanism $M=(x,p)$ for single-dimensional valuations and indivisible items that 
respects the publicly known budgets is IC if and only if it satisfies VM and PI.
\end{proposition}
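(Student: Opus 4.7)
The plan is to extend the classic Myerson / Archer--Tardos characterization of single-parameter truthful mechanisms to the budgeted setting, exploiting the fact that a mechanism that respects the publicly known budgets always yields a finite (non-$-\infty$) utility when the true budget is reported. Since budgets are public, misreports are only over the valuation component, so the standard two-inequality argument carries over essentially unchanged; the only modification is checking that every payment we compare in the argument is bounded by~$b_i$, hence the utilities involved are finite.

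For the forward direction (IC~$\Rightarrow$~VM and PI), I would fix $\theta_{-i}$ and consider two valuations $v_i < v'_i$ for agent $i$ (with budget $b_i$ held fixed). The two IC inequalities
\[
v_i\sum_{j\in M}x_{i,j}(v_i,\theta_{-i})\alpha_j - p_i(v_i,\theta_{-i})\ \ge\ v_i\sum_{j\in M}x_{i,j}(v'_i,\theta_{-i})\alpha_j - p_i(v'_i,\theta_{-i}),
\]
and its twin obtained by swapping $v_i$ and $v'_i$, add up to
\[
(v'_i - v_i)\Bigl[\sum_{j\in M}x_{i,j}(v'_i,\theta_{-i})\alpha_j - \sum_{j\in M}x_{i,j}(v_i,\theta_{-i})\alpha_j\Bigr]\ \ge\ 0,
\]
which is VM. (All four payments are at most $b_i$ by hypothesis, so no $-\infty$ cancellations can occur.) VM implies that, as $v_i$ grows from $0$, the quantity $\sum_j x_{i,j}\alpha_j$ is a non-decreasing step function taking exactly the values $\gamma_0<\gamma_1<\cdots$, with jumps precisely at the critical valuations $c_{\gamma_s}(b_i,\theta_{-i})$. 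Writing IC at valuations just below and just above each $c_{\gamma_s}$ and taking the limit forces $p_i$ to jump by exactly $(\gamma_s-\gamma_{s-1})\,c_{\gamma_s}(b_i,\theta_{-i})$ at that point (a smaller jump lets a type just below lie upward, a larger one lets a type just above lie downward). Summing these jumps from $0$ to the step containing $v_i$, and adding the base payment $p_i((0,b_i),\theta_{-i})$, yields PI.

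For the reverse direction (VM and PI~$\Rightarrow$~IC), I would directly plug PI into the utility formula. Fix true type $\theta_i=(v_i,b_i)$ with $c_{\gamma_t}\le v_i\le c_{\gamma_{t+1}}$ and a misreport $v'_i$ with $c_{\gamma_{t'}}\le v'_i\le c_{\gamma_{t'+1}}$. Using PI for both reports, the difference $u_i(\theta_i,\theta_{-i})-u_i((v'_i,b_i),\theta_{-i})$ telescopes to
\[
\sum_{s=\min(t,t')+1}^{\max(t,t')}(\gamma_s-\gamma_{s-1})\bigl(v_i - c_{\gamma_s}(b_i,\theta_{-i})\bigr)\cdot \mathrm{sgn}(t-t'),
\]
which is non-negative by the definition of the critical valuations $c_{\gamma_s}$ and the relative position of $v_i$ with respect to them. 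Because the mechanism respects the public budgets, $p_i\le b_i$ under both reports, so the utilities are genuinely finite and this algebraic comparison is meaningful.

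The main obstacle is bookkeeping rather than genuinely new ideas: one must verify that the step-function structure of $\sum_j x_{i,j}\alpha_j$ is well-defined under VM (so the values $\gamma_0<\gamma_1<\cdots$ and the critical valuations $c_{\gamma_s}$ make sense), and that the payment jumps at each critical point are pinned down exactly--neither higher nor lower--by IC alone. Once this is set up carefully, both implications reduce to the standard single-parameter argument with the observation that ``respects the publicly known budgets'' keeps all utilities in play finite and the budget coordinate out of the misreport space.
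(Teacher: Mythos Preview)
Your proposal is correct and follows essentially the same approach as the paper: the forward direction uses the two-inequality trick to get VM and then pins down the payment jump at each critical value (the paper does this by contradiction on the smallest violation of PI, you do it by a limit argument, but the content is identical), and the reverse direction plugs PI into the utility difference and telescopes. Your explicit remark that ``respects the publicly known budgets'' keeps all compared utilities finite is exactly the adaptation the paper relies on implicitly.
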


%% deterministic mechanisms for indivisible items
\paragraph{Deterministic Mechanisms for Indivisible Items.}
The proof of our impossibility result uses the characterizations of PO outcomes 
and mechanisms that are IC with public budgets as follows:
(a) PO is characterized by NT and NT induces a lower bound on the agents' 
    payments for a {\em specific} assignment, namely for the case that agent 
    1 only gets item $m$.
(b) IC, in turn, is characterized by VM and PI. Now VM and PI can be used to
    extend the lower bound on the payments for the {\em specific} assignment
    to {\em all} possible assignments.
(c) Finally, IR implies upper bounds on the payments that, with a suitable 
    choice of valuations, conflict with the lower bounds on the payments
    induced by NT, VM, and PI.

%% theorem: impossibility for indivisible items
\begin{theorem}\label{thm:impossibility-sing-dim}
For single-dimensional valuations, indivisible items, and public budgets there
can be no deterministic mechanism $M=(x,p)$ that satisfies IR, PO, and IC.
% for $n \ge 2$ agents and $m \ge 2$ items
\end{theorem}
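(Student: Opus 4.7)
The plan is to follow the three-step outline given immediately before the theorem: use NT (\propref{prop:po-sing-dim}) to derive a lower bound on agent~1's payment at a carefully chosen base allocation, use VM and PI (\propref{prop:ic-sing-dim}) to propagate that lower bound to richer allocations obtained by raising agent~1's reported valuation, and finally pick an input where the induced lower bound strictly exceeds $b_1$, violating IR.

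First I would fix a small instance---two agents and two items ought to suffice, with agent~2 essentially unbudgeted so he acts only as a threat---and focus on a reported type at which \POall\ forces agent~1 to receive only item~$m$. Invoking NT at this outcome with $x'$ equal to the trade that hands agent~1 a higher-quality item~$j$ in exchange for a monetary transfer to agent~2 produces a lower bound on $p_1$ in terms of $v_2$, $\alpha_j$, and $p_2$: for appropriate parameters the value gain $\sum_i\delta_i v_i$ is positive, so NT can hold only because the compensation inequality fails, which pins down $p_1$ from below. This is where the ``both items \emph{and} money'' freedom inside NT, emphasized in the discussion preceding \propref{prop:po-sing-dim}, is essential---prior characterizations that only swap items would not suffice.

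Next I would invoke VM and PI. By VM, raising $v_1$ walks the weighted allocation $\sum_j x_{1,j}\alpha_j$ monotonically upward through a finite sequence of critical levels $c_{\gamma_1}<\cdots<c_{\gamma_T}$, and by PI the payment at each such level equals $p_1((0,b_1),\theta_{-1})$ plus the telescoping sum $\sum_{s=1}^{t}(\gamma_s-\gamma_{s-1})c_{\gamma_s}(b_1,\theta_{-1})$. Combining the NT lower bound at the base allocation with this telescope yields a growing lower bound on $p_1$ at every allocation level reached as $v_1$ increases. Choosing the valuations and budgets so that at the top of the ladder the telescoped bound strictly exceeds $b_1$ then contradicts IR (which requires $p_1\le b_1$), completing the proof.

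The main obstacle I anticipate is the first step. NT permits trades involving several agents and arbitrary transfers of money, so real care is needed to select $x'$ and an input such that exactly one clean payment inequality survives, and to arrange that the same reports of agent~2 drive both the NT bound at the base allocation and the $c_{\gamma_s}$'s appearing in the PI expansion as $v_1$ is increased; otherwise the telescope need not accumulate in the direction required to overshoot $b_1$. The single-dimensional, monotone structure of the allocations as $v_1$ varies is what should keep the bookkeeping tractable, since the set of critical levels is a priori finite and determined by the $\alpha_j$'s.
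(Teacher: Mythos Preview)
Your outline is essentially the paper's own proof: fix two agents and two items with agent~2 unbudgeted, use NT (\propref{prop:po-sing-dim}) at the allocation where agent~1 holds only the low-quality item to get a payment lower bound, convert that via PI into a lower bound on the first critical value, then use VM so that every later critical value inherits the same lower bound, telescoping into a lower bound on $p_1(S)$ for \emph{every} non-empty $S$. The paper then picks $v_1$ just above $v_2$ and uses that agent~1 must win \emph{some} non-empty $S$ (again by NT) to reach an IR contradiction.

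One refinement is worth making explicit. The paper's final contradiction is $p_1(S)>\bigl(\sum_{j\in S}\alpha_j\bigr)\,v_1$, i.e.\ the payment exceeds agent~1's \emph{valuation} for whatever set he actually receives; this works uniformly over all non-empty $S$, so you never need to know which $S$ the mechanism selects or whether the ``top of the ladder'' is reached. Your stated target $p_1>b_1$ does not give a uniform contradiction: when $S=\{2\}$ the NT bound is only $p_1(\{2\})>b_1-(\alpha_1-\alpha_2)v_2<b_1$, so if the mechanism happened to stall at $S=\{2\}$ you would need a separate argument. Aiming at the valuation rather than the budget closes this gap cleanly and is how the paper finishes.
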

\begin{proof}
For a contradiction suppose that there is a mechanism $M = (x,p)$ that is IR, 
PO, and IC for all $n$ and all $m$. Consider a setting with $n=2$ agents and 
$m=2$ items in which $v_1 > v_2 > 0$ and $b_1 > \alpha_{1} v_2.$

Observe that if agent~1's valuation was $v'_1 = 0$ and he reported his valuation
truthfully, then since $M$ satisfies IR his utility would be $u_1((0,b_1),
\theta_{-1},(0,b_1)) = - p_1((0,b_1), \theta_{-1}) \ge 0$. This shows that 
$p_1((0,b_1),\theta_{-1}) \le 0.$

By PO, which by \propref{prop:po-sing-dim} is characterized by NT, agent~1 with
valuation $v_1 > v_2$ and budget $b_1 > \alpha_{1}v_2$ must win at least one
item because otherwise he could buy any item from agent $2$ and compensate 
him for his loss.

PO, respectively NT, also implies that agent $1$'s payment for item $2$ must 
be strictly larger than $b_1 - (\alpha_{1} - \alpha_2)v_2$ because otherwise 
he could trade item $2$ against item $1$ and compensate agent~2 for his loss.

By IC, which by \propref{prop:ic-sing-dim} is characterized by VM and PI, 
agent $1$'s payment for item $2$ is given by $p_1(\{2\}) = 
p_1((0,b_1), \theta_{-1}) + \alpha_2 c_{\alpha_2}(b_1,\theta_{-1})$, where 
$c_{\alpha_2}$ is the critical valuation for winning item $2$. Together with 
$p_1(\{2\}) > b_1 -(\alpha_{1} - \alpha_2)v_2$ this shows that
$c_{\alpha_2}(b_1,\theta_{-1}) > (1/\alpha_2)[b_1-(\alpha_{1}-\alpha_2)v_2-
p_1((0,b_1),\theta_{-1})].$

IC, respectively VM and PI, also imply that agent~1's payment for any non-empty 
set of items $S$ in terms of the fractions $\gamma_t = \sum_{j \in S} \alpha_j 
> \dots > \gamma_1 = \alpha_2 > \gamma_0 = 0$ and corresponding critical 
valuations $c_{\gamma_t}(b_1,\theta_{-1}) \ge \dots \ge c_{\gamma_1} (b_1,
\theta_{-1}) = c_{\alpha_2}(b_1,\theta_{-1})$ is
$p_1(S) =   p_1((0,b_1),\theta_{-1}) + \sum_{s=1}^{t} (\gamma_s - \gamma_{s-1}) 
c_{\gamma_s}(b_1,\theta_{-1})$. Because $c_{\gamma_s}(b_1,\theta_{-1}) \ge 
c_{\alpha_2}(b_1,\theta_{-1})$ for all $s$ and $\sum_{s=1}^{t} (\gamma_s - 
\gamma_{s-1}) = \sum_{j \in S} \alpha_j$ we obtain $p_1(S) \ge p_1((0,b_1),
\theta_{-1}) + (\sum_{j \in S} \alpha_j) c_{\alpha_2}(b_1,\theta_{-1})$.

Combining this lower bound on $p_1(S)$ with the lower bound on $c_{\alpha_2}(b_1,
\theta_{-1})$ shows that $p_1(S) > (\sum_{j \in S}\alpha_j/\alpha_2) [b_1-(
\alpha_{1} - \alpha_2) v_2]$. 

For $v_1$ such that $(1/\alpha_2)[b_1-(\alpha_{1}-\alpha_2)v_2] > v_1 > v_2$ 
we know that agent~1 must win some item, but for any non-empty set of items $S$ 
the lower bound on agent~1's payment for $S$ contradicts IR.
\end{proof}
%%

%% randomized mechanisms for indivisible and divisible items
\paragraph{Randomized Mechanisms for Indivisible and Divisible Items.}
Interestingly, the impossibility result for deterministic mechanisms for 
indivisible items can be avoided by a randomized mechanism:
(a) Apply the randomized mechanism for a single {\em indivisible} item 
of \cite{BhattacharyaEtAl10} to a single indivisible item for which agent 
$i \in N$ has valuation $\vtilde_i = \sum_{j \in M} \alpha_j v_i$. 
(b) Map the single-item outcome $(\xtilde,\ptilde)$ into an outcome 
$(x,p)$ for the multi-item setting by setting $x_{i,j} = 1$ for all $j \in M$ 
if and only if $\xtilde_i = 1$ and setting $p_i = \ptilde_i$ for all $i \in N$. 

A similar idea works for divisible items. The only difference is that we use 
the mechanisms of \cite{BhattacharyaEtAl10} for a single {\em divisible} item, 
and map the single-item outcome $(\tilde{x}, \tilde{p})$ into a multi-item 
outcome by setting $x_{i,j} = \tilde{x}_i$ for all $i \in N$ and all $j \in M$ 
and setting $p_i = \tilde{p}_i$ for all $i \in N.$

The main difficulty in proving that the resulting mechanisms inherit the 
properties of the mechanisms in \cite{BhattacharyaEtAl10} is to show that
the resulting mechanisms satisfy PO (ex post). 
For this we argue that a certain structural property of the single-item 
outcomes is preserved by the mapping to the multi-item setting and remains 
to be sufficient for PO (ex post).

\begin{proposition}\label{prop:utilities}
Let $(\bar{x},\bar{p})$ be the outcome of our mechanism and let $(x,p)$ be the 
outcome of the respective mechanism of \cite{BhattacharyaEtAl10}, then 
$u_i(\bar{x}_i,\bar{p}_i) = u_i(x_i,p_i)$ for all $i \in N$ resp.~$E[u_i(\bar{x}_i,
\bar{p}_i)] = E[u_i(x_i,p_i)]$ for all $i \in N$. 
\end{proposition}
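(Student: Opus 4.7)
The plan is a direct computation in each of the two constructions (indivisible and divisible), exploiting the fact that by design the multi-item allocation $\bar{x}$ is a ``blowup'' of the single-item allocation $\tilde{x}$ that is calibrated exactly so as to match the single-item valuation. Throughout I use the notation from the mechanism description: $\tilde{v}_i = \sum_{j \in M} \alpha_j v_i$ is agent $i$'s value in the single-item instance fed to \cite{BhattacharyaEtAl10}, and $(\tilde{x},\tilde{p})$ is the corresponding single-item outcome.

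First I would establish the key identity
\[
\sum_{j \in M} \bar{x}_{i,j}\, \alpha_j\, v_i \;=\; \tilde{x}_i\, \tilde{v}_i
\qquad \text{for every } i \in N.
\]
In the indivisible-items construction, the mapping sets $\bar{x}_{i,j}=1$ for all $j$ exactly when $\tilde{x}_i=1$ (and $\bar{x}_{i,j}=0$ for all $j$ when $\tilde{x}_i=0$), so $\sum_{j \in M} \bar{x}_{i,j} \alpha_j v_i = \tilde{x}_i \cdot v_i \sum_{j \in M} \alpha_j = \tilde{x}_i \tilde{v}_i$. In the divisible-items construction, the mapping sets $\bar{x}_{i,j} = \tilde{x}_i$ for all $j$, and the same identity is immediate by factoring out $\tilde{x}_i$. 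In both constructions the mapping leaves payments unchanged, $\bar{p}_i = \tilde{p}_i$.

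Second I would observe that the $-\infty$ branch of $u_i$ is not activated by the mapping: since the single-item mechanism of \cite{BhattacharyaEtAl10} is IR (ex post in the public-budget deterministic case and ex post in the private-budget randomized case), we have $\tilde{p}_i \le b_i$ on every realization, and therefore $\bar{p}_i = \tilde{p}_i \le b_i$ as well. Consequently
\[
u_i(\bar{x}_i,\bar{p}_i)
\;=\; \sum_{j \in M} \bar{x}_{i,j}\, \alpha_j\, v_i - \bar{p}_i
\;=\; \tilde{x}_i\, \tilde{v}_i - \tilde{p}_i
\;=\; u_i(\tilde{x}_i,\tilde{p}_i),
\]
which is the ex post equality needed for the deterministic (public-budget, divisible-items) case.

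Finally, for the randomized variants (single indivisible item, and single divisible item with private budgets), the identities $\bar{p}_i = \tilde{p}_i$ and $\sum_{j \in M} \bar{x}_{i,j} \alpha_j v_i = \tilde{x}_i \tilde{v}_i$ hold on \emph{every} realization of the internal randomness; taking expectations and using linearity of expectation gives $\E[u_i(\bar{x}_i,\bar{p}_i)] = \E[u_i(\tilde{x}_i,\tilde{p}_i)]$, which is the ex interim form of the claim. There is no real obstacle here beyond careful bookkeeping; the one point to be careful about is the $-\infty$ branch, which as noted above is ruled out by IR of the underlying single-item mechanism together with the fact that the mapping transports payments verbatim.
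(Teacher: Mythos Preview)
Your proof is correct and follows essentially the same computation as the paper: establish $\sum_{j}\bar{x}_{i,j}\alpha_j v_i=\tilde{x}_i\tilde{v}_i$ and $\bar{p}_i=\tilde{p}_i$, then read off the utility equality pointwise and (by linearity) in expectation. The one minor difference is the treatment of the $-\infty$ branch: you invoke IR of the base mechanism to argue it never fires, whereas the paper simply observes that since $\bar{p}_i=\tilde{p}_i$, one utility is $-\infty$ if and only if the other is, so equality holds in that case too. The paper's handling is slightly more robust---it does not rely on any property of the underlying mechanism and remains valid even under misreported budgets (relevant when the proposition is later used to transfer IC), where an overstated budget could in principle yield $\tilde{p}_i>b_i$ at the true type.
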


\begin{theorem}\label{thm:singdim-positive}
For single-dimensional valuations, divisible or indivisible items, 
and private budgets there is a randomized mechanism that satisfies IR ex 
interim, NPT ex post, PO ex post, and IC ex interim.
Additionally, for single-dimensional valuations and divisible items there 
is a deterministic mechanism that satisfies IR, NPT, PO, and IC for public 
budgets.
\end{theorem}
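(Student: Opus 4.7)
The plan is to construct both mechanisms by reducing to a single-item instance. For each agent $i$, define the virtual single-item valuation $\vtilde_i = A \cdot v_i$ where $A = \sum_{j \in M} \alpha_j$, and run the appropriate mechanism of \cite{BhattacharyaEtAl10}: for the randomized part, the randomized single-item mechanism for private budgets (divisible or indivisible); for the deterministic part, the deterministic adaptive clinching auction for a single divisible item with public budgets. Let $(\xtilde,\ptilde)$ be its outcome. Lift it to a multi-item outcome $(x,p)$ by setting $p_i = \ptilde_i$ and, in the divisible case, $x_{i,j} = \xtilde_i$ for all $j$, and in the randomized indivisible case, $x_{i,j} = 1$ iff $\xtilde_i = 1$. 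The bijection $(v_i,b_i) \leftrightarrow (\vtilde_i,b_i)$ makes each mechanism a well-defined direct-revelation mechanism for the multi-item setting.

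By construction $\sum_{j\in M} x_{i,j}\alpha_j v_i = \xtilde_i A v_i = \xtilde_i \vtilde_i$, which is the statement of \propref{prop:utilities}. Because agent utilities (ex post in the deterministic case, ex interim in the randomized case) and auctioneer revenue coincide between the two mechanisms, IR and IC transfer directly from the Bhattacharya et al.\ guarantees; NPT transfers ex post since payments are unchanged.

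The main obstacle is establishing PO ex post for the lifted outcome. I would use the NT characterization of \propref{prop:po-sing-dim}. Suppose toward contradiction that $(x,p)$ admits a reallocation $x'$ witnessing a violation, with $\delta_i = \sum_{j\in M}(x'_{i,j} - x_{i,j})\alpha_j$, $W = \{i : \delta_i > 0\}$, and $L = \{i : \delta_i \le 0\}$. Define a single-item reallocation by $\xtilde'_i = \xtilde_i + \delta_i / A$. Feasibility of $x'$ in the multi-item setting forces $\xtilde'_i \in [0,1]$, and $\sum_i \delta_i \le 0$ yields $\sum_i \xtilde'_i \le \sum_i \xtilde_i \le 1$, so $\xtilde'$ is a valid single-item allocation. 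Setting $\tilde{\delta}_i = \xtilde'_i - \xtilde_i = \delta_i/A$, we have $\tilde{\delta}_i \vtilde_i = (\delta_i/A)(A v_i) = \delta_i v_i$, and $\tilde{\delta}_i$ has the same sign as $\delta_i$, so the winner and loser sets coincide. Both the strict inequality $\sum_i \tilde{\delta}_i \vtilde_i > 0$ and the budget condition $\sum_{i \in W}\min(b_i - \ptilde_i, \tilde{\delta}_i \vtilde_i) + \sum_{i \in L}\tilde{\delta}_i \vtilde_i \ge 0$ then transfer verbatim from the hypothesized multi-item trade, exhibiting a NT-violation for $(\xtilde,\ptilde)$ and contradicting PO of the single-item Bhattacharya et al.\ mechanism via \propref{prop:po-sing-dim} specialized to $m=1$.

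The delicate remaining point is NT condition (a), namely $\sum_{i \in N} x_{i,j} = 1$ for every $j \in M$. Under the lift, this reduces to $\sum_{i \in N}\xtilde_i = 1$ in the single-item outcome, which is the structural property of the clinching-based mechanisms of \cite{BhattacharyaEtAl10} that the authors allude to: whenever some agent has strictly positive valuation and positive budget, the single (virtual) item is fully allocated. Because the mapping preserves this full-allocation property uniformly across all $j \in M$, condition (a) holds for $(x,p)$, completing the PO argument. The same argument applies ex post to each realization of the randomized mechanism, giving PO ex post in that case as well.
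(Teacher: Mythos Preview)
Your reduction and the transfer of IR, NPT, and IC are exactly the paper's approach, and for the \emph{divisible} case your PO argument is correct and in fact slightly cleaner than the paper's: you push a hypothetical multi-item NT-violation directly down to a single-item NT-violation via $\xtilde'_i=\xtilde_i+\delta_i/A$, whereas the paper re-proves PO from scratch using a structural property of the clinching outcomes.

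There is, however, a genuine gap in the \emph{indivisible} case. Your constructed alternative $\xtilde'_i=(\sum_{j\in M} x'_{i,j}\alpha_j)/A$ lies in $[0,1]$ but not in $\{0,1\}$ in general: an indivisible multi-item reallocation $x'$ that splits the items among several agents produces genuinely fractional $\xtilde'_i$. Hence $\xtilde'$ is not a feasible allocation in the single \emph{indivisible} item setting, and it does not witness an NT-violation there. The PO guarantee you are invoking for the Bhattacharya et al.\ randomized indivisible mechanism is only PO against indivisible alternatives, so you cannot derive a contradiction from a fractional $\xtilde'$.

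This is precisely why the paper does not reduce PO to single-item PO. Instead it uses an ex-post structural property of the Bhattacharya et al.\ outcomes that your last paragraph only half states: besides full allocation, one needs that whenever some agent $i$ receives a positive share and $v_{i'}>v_i$, then $p_{i'}=b_{i'}$. This ``higher bidders are budget-exhausted'' property survives the lift to the multi-item setting and, combined with \propref{prop:po-sing-dim}, rules out any NT-violation by a direct case analysis on the winner/loser sets, uniformly for divisible and indivisible items. Your argument can be repaired by invoking this second structural property rather than single-item PO in the indivisible case.
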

\begin{proof}
IR (ex interim) and IC (ex interim) follow from Proposition~\ref{prop:utilities} 
and the fact that the mechanisms of \cite{BhattacharyaEtAl10} are IR (ex interim)
and IC (ex interim). 
NPT (ex post) follows from the fact that the payments in our mechanisms and the 
mechanisms of \cite{BhattacharyaEtAl10} are the same, and the mechanisms 
in \cite{BhattacharyaEtAl10} satisfy NPT (ex post). 
For PO (ex post) we argue that the structural property of the outcomes of the
mechanisms in \cite{BhattacharyaEtAl10} that (a) $\sum_{i \in N} \xtilde_{i,j} 
= 1$ for all $j \in M$ and (b) $\sum_{j \in M} \xtilde_{i,j} > 0$ and 
$\vtilde_{i'} > \vtilde_i$ imply $\ptilde_{i'} = b_{i'}$ is preserved by the 
mapping to the multi-item setting and remains to be sufficient for PO (ex post).

We begin by showing that the structural property is preserved by the mapping. For this
observe that $\sum_{i \in N} \xtilde_{i,j} = 1$ for all $j \in M$ implies 
that $\sum_{i \in N} x_{i,j} = 1$ for all $j \in M$ and that  $\sum_{j \in M} 
\xtilde_{i,j} > 0$ and $\vtilde_{i'} > \vtilde_i$ imply $\ptilde_{i'} = b_{i'}$
implies that $\sum_{j \in M} x_{i,j} > 0$ and $v_{i'} > v_i$ imply $p_{i'} = b_{i'}$.

Next we show that the structural property remains to be sufficient for PO 
(ex post). For this assume by contradiction that the outcome $(x,p)$ 
is {\em not} PO (ex post). Then, by Proposition \ref{prop:po-sing-dim}, 
there exists an $x'$ such that $\sum_{i \in N} \delta_i v_i > 0$ and $\sum_{i 
\in W} \min(b_i-p_i,\delta_i v_i) + \sum_{i \in L} \delta_i v_i \ge 0$, where 
$\delta_i = \sum_{j \in M} (x'_{i,j} - x_{i,j}) \alpha_j$, $W = \{i \in N \mid 
\delta_i > 0\}$, and $L = \{i \in N \mid \delta_i \le 0\}$.

Because $(x,p)$ satisfies condition (a), i.e., $\sum_{i \in N} x_{i,j} = 1$ 
for all $j \in M$, and $x'$ is a valid assignment, i.e., $\sum_{i \in N} 
x'_{i,j} \le 1$ for all $j \in M$, we have $\sum_{i \in N} \delta_i = \sum_{j 
\in M} \sum_{i \in N} (x'_{i,j} - x_{i,j}) \alpha_j \le 0$. Because 
$\sum_{i \in N} \delta_i v_i > 0$ we have $\sum_{i \in W} \delta_i v_i 
\ge \sum_{i \in N} \delta_i v_i > 0$ and, thus, $\sum_{i \in W} \delta_i > 0$. 
We conclude that $\sum_{i \in L} \delta_i = \sum_{i \in N} \delta_i - \sum_{i 
\in W} \delta_i < 0$ and, thus, $\sum_{i \in L} \delta_i v_i < 0$. 
% in the last two steps we use that $v_i > 0$ for all $i\in N$

Because $(x,p)$ satisfies condition (b), i.e., $\sum_{j \in M} x_{i,j} > 0$ 
and $v_ {i'} > v_i$ imply $p_{i'} = b_{i'}$, there exists a $t$ with $1 \le 
t \le n$ such that (1) $\sum_{j \in M} x_{i,j} \ge 0$ and $p_{i} = b_{i}$ 
for $1 \le i \le t$, (2) $\sum_{j \in M} x_{i,j} \ge 0$ and $p_{i} \le b_{i}$ 
for $i = t+1$, and (3) $\sum_{j \in M} x_{i,j} = 0$ and $p_i \le b_i$ 
for $t+2 \le i \le n$.

{\em Case 1:} $t = n$. 
Then $\sum_{i \in W} \min(b_i-p_i, \delta_i v_i) =0$ and, thus, $\sum_{i \in W} 
\min(b_i-p_i,\delta_i v_i) + \sum_{i \in L} \delta_i v_i < 0$. 

{\em Case 2:} $t < n$ and $W \cap \{1,\dots,t\} = \emptyset$. 
Then $\sum_{i \in W} \delta_i v_i \le \sum_{i \in W} \delta_i v_{t+1}$ and
$\sum_{i \in L} \delta_i v_i \le \sum_{i \in L} \delta_i v_{t+1}$ and, thus,
$\sum_{i \in N} \delta_i v_i = \sum_{i \in W} \delta_i v_i + \sum_{i \in L} 
\delta_i v_i \le \sum_{i \in N} \delta_i v_{t+1} \le 0$. 

{\em Case 3:} $t < n$ and $W \cap \{1,\dots,t\} \neq \emptyset$. 
Then $\sum_{i \in W} \min(p_i-b_i,\delta_i v_i) \le \sum_{i \in W \setminus 
\{1..t\}} \delta_i v_{t+1}$ and $\sum_{i \in L} \delta_i v_i \le \sum_{i \in L} 
\delta_i v_{t+1}$ and, thus,  $\sum_{i \in W} \min(p_i-b_i,\delta_i v_i) + 
\sum_{i \in L} \delta_i v_i \le (\sum_{i \in N} \delta_i - \sum_{i \in W \cap 
\{1,\dots,t\}} \delta_i) v_{t+1} < 0$. 
\end{proof}

% % % % % % % % % % % % % % % % % % % % % % % % % % % % % % % % % % % % % % % %

%% section: conclusion and future work
\section{Conclusion and Future Work}
In this paper we analyzed IR, PO, and IC mechanisms for settings with 
heterogeneous items. Our main accomplishments are: (a) An impossibility
result for {\em randomized} mechanisms and {\em public} budgets for 
additive valuations. (b) {\em Randomized} mechanisms that achieve these 
properties for {\em private} budgets and a restricted class of additive 
valuations. We are able to circumvent the impossibility result in the
restricted setting because our argument for the impossibility result is
based on the ability of an agent to overstate his valuation for one 
and understate his valuation for another item, which is not possible 
in the restricted setting. A promising direction for future work is to 
identify other valuations for which this is the case.

% % % % % % % % % % % % % % % % % % % % % % % % % % % % % % % % % % % % % % % %

%% bibliography
\bibliographystyle{abbrvnat}
\bibliography{abbshort,literature}

% % % % % % % % % % % % % % % % % % % % % % % % % % % % % % % % % % % % % % % %

%% appendix
\appendix

%% section: proof of prop:char-multi-dim
\section{Proof of \propref{prop:char-multi-dim}}
Fix $i \in N$ and $\theta_{-i}=(v_{-i},b_{-i})$. By IC agent $i$ does not 
benefit from reporting $\theta'_i=(v'_i,b_i)$ when his true type is $\theta_i=
(v_i,b_i)$, nor does he benefit from reporting $\theta_i=(v_i,b_i)$ when his 
true type is $\theta'_i=(v'_i,b_i)$. Thus,
\begin{align*}
  v_i(x(\theta_i,\theta_{-i})) - p_i(\theta_i,\theta_{-i})    
    &\ge v_i(x(\theta'_i,\theta_{-i})) - p_i(\theta'_i,\theta_{-i})\\
  v'_i(x(\theta'_i,\theta_{-i})) - p_i(\theta'_i,\theta_{-i}) 
    &\ge v'_i(x(\theta_i,\theta_{-i})) - p_i(\theta_i,\theta_{-i})
\end{align*}
By combining these inequalities we get
\begin{align*}
  &v'_i(x_i(\theta'_i,\theta_{-i})) - v'_i(x_i(\theta_i,\theta_{-i})) 
     \ge v_i(x_i(\theta'_i,\theta_{-i})) - v_i(x_i(\theta_i,\theta_{-i})).
\end{align*}

%% section: analysis of cases 1 and 2 in sec:multidim
\section{Analysis of Cases~1 and~2 in Section~\ref{sec:multidim}}\label{app:C1C2}

We start the analysis with an auxiliary lemma that shows that if 
at least one agent $i \in N$ has a positive valuation for some item $j \in
M$ then this item $j$ must be assigned completely in every outcome that 
is IR and \POall. 
\begin{lemma}\label{lem:everything-is-assigned}
If the valuation of at least one agent for an item $j\in M$ is positive, then an 
IR and \POall outcome assigns all of item $j$, i.e., $\sum_{i=1}^{n} x_{i,j}=1$.
\label{lem:assign-all}
\end{lemma}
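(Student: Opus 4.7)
The plan is to prove the contrapositive-style statement by contradiction: assume an IR and PO outcome $(x,p)$ leaves some item $j$ partially unassigned, and construct an explicit Pareto improvement that contradicts PO.

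First I would introduce the slack $\varepsilon = 1 - \sum_{i=1}^n x_{i,j}$, which by assumption is strictly positive, and fix an agent $i^*$ with $v_{i^*,j} > 0$ (which exists by hypothesis). Then I would define an alternative outcome $(x',p')$ by setting $x'_{k,\ell} = x_{k,\ell}$ for all $(k,\ell) \neq (i^*,j)$, $x'_{i^*,j} = x_{i^*,j} + \varepsilon$, and $p' = p$. This is a valid allocation since $\sum_{k} x'_{k,j} = 1$ and all other column sums are unchanged.

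Next I would verify that $(x',p')$ weakly Pareto-dominates $(x,p)$ and strictly improves $i^*$. For every agent $k \neq i^*$ the assignment and payment are identical, so $u_k$ is unchanged. For agent $i^*$ the payment is unchanged (so the budget constraint that was satisfied under $(x,p)$ by IR remains satisfied), and the valuation increases by $\varepsilon\,v_{i^*,j} > 0$, yielding $u_{i^*}(x'_{i^*},p'_{i^*},\theta_{i^*}) = u_{i^*}(x_{i^*},p_{i^*},\theta_{i^*}) + \varepsilon\,v_{i^*,j} > u_{i^*}(x_{i^*},p_{i^*},\theta_{i^*})$. The auctioneer revenue $\sum_k p'_k = \sum_k p_k$ is unchanged. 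Hence all parties are no worse off and $i^*$ is strictly better off, contradicting PO of $(x,p)$.

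I do not expect a real obstacle here; the only subtlety to watch is that the constructed alternative outcome itself satisfies IR (as required by the footnote to the PO definition when the original outcome is IR). Agent $i^*$'s utility strictly increased and remains nonnegative, every other agent's utility is unchanged and thus still nonnegative, and $\sum_k p'_k = \sum_k p_k \ge 0$, so auctioneer rationality is preserved. This completes the contradiction and yields $\sum_{i=1}^n x_{i,j} = 1$.
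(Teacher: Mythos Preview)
Your proof is correct and follows essentially the same approach as the paper: assume a positive fraction of item $j$ is unassigned, hand it to an agent with positive valuation at zero additional cost, and observe this is a Pareto improvement. Your version is in fact more carefully spelled out (you verify feasibility of $x'$ and IR of the alternative outcome), but the underlying idea is identical.
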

\begin{proof}
Let us assume by contradiction that we have an outcome $(x,p)$ in which not all 
of the fractions of item~$j$ are assigned to the agents. Then the utility of the 
agents who have a positive valuation strictly increases when they get the unsold 
fractions of item~$j$ at price~0, while the utility of the other agents and that
of the auctioneer remain unchanged. Contradiction to \POall!%\qed
\end{proof}

Case~1 is easy: Agent~2 is not budget restricted and has the highest valuations 
for both items; so he will get both items. Thus in this case the utility for 
agent~1 is zero.
\begin{lemma}\label{lem:mult-case1}
Given \ica, \icb\ and \icc, then $x_{1,1}=0$, $x_{1,2}=0$, $x_{2,1}=1$, 
$x_{2,2}=1$, and $u_{1}=0$ in every IR and \POall outcome selected by an 
IC mechanism.
\end{lemma}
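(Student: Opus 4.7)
The plan is to exploit the fact that agent~2 is effectively unbudgeted (by \ica) and strictly outbids agent~1 on both items (by \icb\ and \icc), so any outcome in which agent~1 holds a positive fraction of some item, or is paid a positive amount, admits a Pareto-improving reallocation in which agent~2 buys him out.

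First I would invoke \lemref{lem:assign-all}: since $v_{2,1}>v_{1,1}\ge 0$ and $v_{2,2}>v_{1,2}\ge 0$, both items are fully allocated, so $x_{2,j}=1-x_{1,j}$ for $j=1,2$, and it suffices to show $x_{1,1}=x_{1,2}=0$ and $u_1=0$. For the allocation claim I argue by contradiction: assume $x_{1,1}>0$ or $x_{1,2}>0$ and exhibit an alternative outcome that assigns both items entirely to agent~2, with payments $p'_1=-u_1$ (so agent~1 is indifferent) and $p'_2$ chosen strictly inside the interval $[p_1+p_2+u_1,\; p_2+x_{1,1}v_{2,1}+x_{1,2}v_{2,2}]$. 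This interval has positive length: substituting $p_1+u_1=v_1(x_1)=x_{1,1}v_{1,1}+x_{1,2}v_{1,2}$ reduces the gap to $x_{1,1}(v_{2,1}-v_{1,1})+x_{1,2}(v_{2,2}-v_{1,2})>0$ by \icb\ and \icc. Such a $p'_2$ keeps agent~1 indifferent, keeps agent~2 at least as well off, and strictly increases the auctioneer's revenue, contradicting \POall. For the utility claim I note $u_1=-p_1\ge 0$ by IR, and if $p_1<0$ I shift the surplus by setting $p'_1=0$ and $p'_2=p_1+p_2$, which strictly improves both agents while leaving the auctioneer unchanged.

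The main obstacle I anticipate is verifying that the constructed alternatives remain feasible for agent~2 under his budget cap, i.e., $p'_2 \le b_2$. This is precisely where \ica\ enters: combining the IR bound $p_2\le v_2(x_2)=(1-x_{1,1})v_{2,1}+(1-x_{1,2})v_{2,2}$ with the upper endpoint of the interval gives $p'_2 \le v_{2,1}+v_{2,2}<b_2$, while the second construction only lowers agent~2's payment. Hence each alternative is a legitimate IR outcome and the two contradictions with \POall\ go through, yielding $x_{1,1}=x_{1,2}=0$, $x_{2,1}=x_{2,2}=1$, and $u_1=0$ as claimed.
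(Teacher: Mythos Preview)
Your argument for the allocation $x_{1,1}=x_{1,2}=0$ is correct and close in spirit to the paper's: both construct a trade in which agent~2 absorbs agent~1's share and pays a price strictly between the two agents' valuations, with \ica\ guaranteeing feasibility. The paper routes the payment directly from agent~2 to agent~1, while you route it through the auctioneer; either works.

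The claim $u_1=0$, however, is where your proposal breaks. You assert that if $p_1<0$ then setting $p'_1=0$ and $p'_2=p_1+p_2$ ``strictly improves both agents,'' but this is false for agent~1: his utility drops from $-p_1>0$ to $0$. Your reallocation therefore is \emph{not} a Pareto improvement, and the contradiction with \POall\ does not go through. In fact no purely PO/IR argument can work here: the outcome in which agent~1 receives nothing but is paid a small subsidy $-p_1>0$ (with $p_2$ adjusted so that $p_1+p_2\ge 0$ and $p_2\le v_{2,1}+v_{2,2}<b_2$) is both IR and \POall, since the total valuation $v_1(x'_1)+v_2(x'_2)$ is already maximized and any weak Pareto improvement forces all inequalities to be equalities.

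The missing ingredient is IC, which you never invoke. The paper pins down $p_1=0$ via two IC steps: first, at the profile $v_{1,1}=v_{1,2}=0$ it shows $p_2\le 0$ (otherwise agent~2 could understate and still win everything at a lower IR-capped price), whence auctioneer rationality forces $p_1=p_2=0$; second, if $p_1<0$ at some other Case~1 profile, then the agent~1 with true valuations $(0,0)$ would profitably misreport to that profile, contradicting IC. You will need an argument of this kind to close the gap.
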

\begin{proof}
We divide the proof into the following parts: in \ref{it:mc1a} we show that $x_{1,1}=0$, $x_{1,2}=0$, $x_{2,1}=1$, and $x_{2,2}=1$, and in \ref{it:mc1b} we show that $u_{1}=0$.
\begin{enumerate}
\item
\label{it:mc1a}
Let us assume by contradiction that we have an IR and \POall outcome where 
$x_{1,1}>0$ or $x_{1,2}>0$. IR requires that $p_2 \leq x_{2,1} v_{2,1} + x_{2,2} 
v_{2,2}$. Hence, agent~2 can buy the fractions $x_{1,1}$ of item~1 and $x_{1,2}$ 
of item~2 for a payment $p$ with $x_{1,1} v_{2,1} + x_{1,2} v_{2,2}>p\geq x_{1,1} 
v_{1,1} + x_{1,2} v_{1,2}$ from agent~1. Because of \icb\ and \icc\ such a payment 
exists and agent~2 has enough money, since \ica\ implies
\begin{equation}
  b_2 > v_{2,1} + v_{2,2} 
    = (x_{1,1}+x_{2,1}) v_{2,1} + (x_{1,2}+x_{2,2}) v_{2,2}> p_2 + p.
\end{equation}
The utility of agent~2 would increase and the utilities of agent~1 and the 
auctioneer would not decrease. Contradiction to \POall!
%%%
\item
\label{it:mc1b}
We have already shown before that agent~1 gets no fraction of the items, and 
therefore, IR implies that his payments cannot be positive.

Let us consider the subcase where $v_{1,1}=v_{1,2}=0$ and agent~1 reports 
truthfully. The valuations of agent~2 are positive. Because of IR the payment 
of agent~2 cannot exceed his reported valuation, but \ref{it:mc1a} holds when 
his reported valuations are positive. Therefore, agent~2 would have an incentive
to understate his valuation when his payment would be positive. Hence, IR of the 
auctioneer implies that the payment of both agents is equal to~0. This means, 
that the utility of agent~1 is~0 in this case.

If there would exist any other reported valuation of agent~1, where he gets no 
items, but where his payments are negative, then he would have an incentive to 
lie, when his valuations are equal to~0. This would contradict IC!
\end{enumerate}
\end{proof}

In Case~2, agent~1 has the higher valuation for item~1, while agent~2 has the 
higher valuation for item~2. Thus, agent~1 gets item~1 and agent~2 gets item~2. 
Since the only difference to Case~1 is that in Case~2 $v_{1,1} > v_{2,1}$ while 
in Case~1 $v_{1,1} < v_{2,1}$, the critical value whether agent~2 gets item~1 or 
not is $v_{2,1}$, and thus in every IC mechanism, agent~1 has to pay $v_{2,1}$ 
and his utility is $v_{1,1} - v_{2,1}$.
\begin{lemma}\label{lem:mult-case2}
Given \ica, \icbp, \icc, and \icd, then $x_{1,1}=1$, $x_{1,2}=0$, 
$x_{2,1}=0$, $x_{2,2}=1$, and $u_1=v_{1,1}-v_{2,1}$ in every IR and \POall 
outcome selected by an IC mechanism. 
\end{lemma}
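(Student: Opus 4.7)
The plan is to establish first the claimed allocation by \POall trading arguments, and then the payment $p_1 = v_{2,1}$ by an IC argument that couples Case~1 (\lemref{lem:mult-case1}) with Case~2 across the threshold at $v_{1,1} = v_{2,1}$.

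\textbf{Stage 1 (Allocation).} By \lemref{lem:everything-is-assigned} each item is fully assigned. To show $x_{2,2}=1$, I would suppose for contradiction that $x_{1,2}>0$. Since \icc, agent~2 can buy the fraction $x_{1,2}$ of item~2 from agent~1 at a total transfer strictly between $x_{1,2}v_{1,2}$ and $x_{1,2}v_{2,2}$, split between payments so that both agents strictly gain utility while the auctioneer's revenue weakly rises. Agent~2's budget is respected because IR yields $p_2 \leq v_{2,1} + v_{2,2}$ and the additional transfer is strictly less than $v_{2,2}$, so the new payment is strictly less than $v_{2,1}+v_{2,2}<b_2$ by \ica, contradicting \POall. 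Next, $x_{1,1}=1$ follows analogously: if $x_{2,1}>0$, then by \icbp agent~1 can buy the fraction $x_{2,1}$ of item~1 at a Pareto-improving price. Agent~1's budget is respected because IR combined with $x_{1,2}=0$ (just established) gives $p_1\leq x_{1,1}v_{1,1}$, so his new payment is strictly less than $v_{1,1}<b_1$ by \icd.

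\textbf{Stage 2 (Payment).} Fix all inputs except agent~1's report $\tilde v_{1,1}$ for item~1, leaving all other parameters as in the hypothesis. For $\tilde v_{1,1}<v_{2,1}$, \lemref{lem:mult-case1} applies and yields allocation $(0,0)$ with payment~$0$; for $v_{2,1}<\tilde v_{1,1}<b_1$, Stage~1 yields allocation $(1,0)$ with some payment $P(\tilde v_{1,1})$. Because the allocation is constant on the latter range, IC forces $P$ to be constant there: if $P(\tilde v)<P(\tilde v')$ for two such reports, then agent~1 with true value $\tilde v'$ would strictly prefer reporting $\tilde v$, since both reports yield the same bundle. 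Call this constant $P$. Now apply IC across the threshold in both directions: truth-telling at $v_{1,1}=v_{2,1}+\epsilon$ must weakly beat reporting something below $v_{2,1}$, giving $(v_{2,1}+\epsilon)-P\geq 0$; truth-telling at $v_{1,1}=v_{2,1}-\epsilon$ (where utility is $0$) must weakly beat reporting something above $v_{2,1}$, giving $0\geq(v_{2,1}-\epsilon)-P$. Letting $\epsilon\to 0$ forces $P=v_{2,1}$, so $u_1=v_{1,1}-v_{2,1}$.

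The delicate part is Stage~2, which relies on two facts: the allocation is genuinely constant across the whole open range $\tilde v_{1,1} \in (v_{2,1},b_1)$ of reports, so that IC collapses into payment constancy; and \lemref{lem:mult-case1} applies uniformly to every report $\tilde v_{1,1} \in (0, v_{2,1})$. Both hold because I vary only $\tilde v_{1,1}$ while keeping the other parameters fixed, preserving the hypotheses of the appropriate case lemma on each side of the threshold, after which the $\epsilon$-argument handles the boundary.
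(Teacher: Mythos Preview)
Your proof is correct and follows essentially the same approach as the paper: Stage~1 matches the paper's trading arguments almost verbatim, and Stage~2 is the same threshold-crossing IC argument against \lemref{lem:mult-case1}. The only cosmetic difference is that the paper obtains the upper bound $p_1\le v_{2,1}$ directly from IR at a downward lie (rather than first establishing payment constancy on $(v_{2,1},b_1)$), but your route via constancy plus a symmetric $\epsilon$-argument is equally valid.
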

\begin{proof}
We divide the proof into the following parts: in \ref{it:mc2a} we show that 
$x_{1,1}=1$, $x_{1,2}=0$, $x_{2,1}=0$, and $x_{2,2}=1$, and in \ref{it:mc2b} 
we show that $u_1=v_{1,1}-v_{2,1}$.
\begin{enumerate}
\item \label{it:mc2a}
Let us assume by contradiction that $x_{1,2}>0$. Then, agent~2 can buy these 
fractions of item~2 for a payment $p$ with $x_{1,2} v_{2,2} > p \geq x_{1,2} 
v_{1,2}$, which exists because of \icc. IR  and \ica\ ensure that agent~2 has 
enough budget, since $b_2 > v_{2,1} + v_{2,2} = (x_{1,1}+x_{2,1}) v_{2,1} + 
(x_{1,2}+x_{2,2}) v_{2,2} \geq p_2 + x_{1,1} v_{2,1} + x_{1,2} v_{2,2} > p_2 + 
p$. The utility of the agent~2 would increase, while the utilities of agent~1 
and the auctioneer would not decrease. Contradiction to \POall!

Otherwise, let us assume that $x_{1,1}<1$ and $x_{1,2}=0$. Then, agent~1 can 
buy the other fractions of item~1 for a payment $p$ with $x_{2,1} v_{1,1} > p 
\geq x_{2,1} v_{2,1}$, which exists because of \icbp. IR and \icd\ 
ensure that agent~1 has enough budget, since $b_1 > v_{1,1} = (x_{1,1}+x_{2,1}) 
v_{1,1} \geq p_1 + x_{2,1} v_{1,1} > p_1 + p$. The utility of agent~1 would 
increase, while the utilities of agent~2 and the auctioneer would not decrease. 
Contradiction to \POall!
%%%
\item\label{it:mc2b}
We show first that $p_1\leq v_{2,1}$. Since $x_{1,1}=1$ and $x_{1,2}=0$, IR 
requires that $p_1 \leq v_{1,1}$. If $p_1>v_{2,1}$, then agent~1 has an 
incentive to lie. If he states that his valuation for item~1 is $v'_{1,1}$ with 
$p_1>v'_{1,1}>v_{2,1}$, then the allocation of the items does not change, but he 
pays less because of IR. Contradiction to IC!

Now, we show that $p_1\geq v_{2,1}$. Let us therefore assume by contradiction 
that $p_1 < v_{2,1}$. If we have $v'_{1,1}$ with $p_1<v'_{1,1}<v_{2,1}$ instead 
of $v_{1,1}$, and all the other valuations are left unchanged, then 
Lemma~\ref{lem:mult-case1} implies that $u'_1=0$. Hence, in this case agent~1 
can increase his utility when he lies and states that his valuation is $v_{1,1}$, 
because his utility would be $v'_{1,1}-p_1>0$. Contradiction to IC!

Since agent~1 gets all fractions of item~1, no fraction of item~2, and has to 
pay $v_{2,1}$, his utility is $v_{1,1} - v_{2,1}$.
\end{enumerate}
\end{proof}

%% section: proof of lem:mult-case3a
\section{Proof of Lemma~\ref{lem:mult-case3a}}
We divide the proof into the following parts: in \ref{it:l1a} we show that 
$x_{1,1}=1$ and $x_{1,2}=1$ if $p_{1}<b_{1}$, in \ref{it:mc3a} we show that 
$x_{1,2}>(1-x_{1,1}) \frac{v_{2,1}}{v_{2,2}}$ if $p_{1}=b_{1}$, and in 
\ref{it:l1c} we show that $x_{1,1}=1$ and $x_{1,2}>0$ if $p_{1}=b_{1}$.
\begin{enumerate}
\item\label{it:l1a}
Let us assume by contradiction that $p_1<b_1$ and $x_{1,j}<1$ for an item $j\in
\{ 1,2\}$. Agent~1 can increase his utility by buying $\min\{\frac{b_1-p_1}{p},
x_{2,j}\}$ fractions of item $j$ for a unit price $p$ with $v_{1,j} > p \geq 
v_{2,j}$ from agent~2. Such a price exists, because of \icbp\ and 
\iccp. Agent~1 has enough money for the trade, since $p_1 + p 
\min\{\frac{b_1-p_1}{p},x_{2,j}\} = \min\{ b_1, p_1 + p x_{2,j}\} \leq b_1$. 
The utility of agent~1 would increase, and the utilities of agent~2 and the 
auctioneer would not decrease. Contradiction to \POall!
%%%
\item
\label{it:mc3a}
IR requires $b_1 = p_1\leq v_{1,1} x_{1,1} + v_{1,2} x_{1,2}$, and therefore, 
$x_{1,2}\geq \frac{b_1-v_{1,1}x_{1,1}}{v_{1,2}}$. If $x_{1,1}=1$, then \icd\ 
implies that $(1-x_{1,1}) \frac{v_{2,1}}{v_{2,2}}=0<\frac{b_1-v_{1,1}}{v_{1,2}}
=\frac{b_1-v_{1,1}x_{1,1}}{v_{1,2}}$. Otherwise, if $x_{1,1}=0$, then \icd\ and 
\ice\ imply that $(1-x_{1,1}) \frac{v_{2,1}}{v_{2,2}}=\frac{v_{2,1}}{v_{2,2}}<
\frac{b_1}{v_{1,2}}=\frac{b_1-v_{1,1}x_{1,1}}{v_{1,2}}$, and hence, $(1-x_{1,1}) 
\frac{v_{2,1}}{v_{2,2}}<\frac{b_1-v_{1,1}x_{1,1}}{v_{1,2}}$ for all $x_{1,1}\in 
[0,1]$. Therefore, we have that $(1-x_{1,1}) \frac{v_{2,1}}{v_{2,2}}<x_{1,2}$ 
for all possible values of $x_{1,1}$.
%%%
\item\label{it:l1c}
We split the proof into two parts. We assume by contradiction that 
either $p_1=b_1$, $x_{1,1}\leq 1$ and $x_{1,2}=0$, or that $p_1=b_1$, 
$x_{1,1}<1$ and $x_{1,2}>0$.

Let us assume that $p_1=b_1$, $x_{1,1}\leq 1$ and $x_{1,2}=0$. According to 
\icd, the 
utility of agent~1 is negative. Contradiction to IR!

We will now investigate the other case and assume that $p_1=b_1$, $x_{1,1}<1$ 
and $x_{1,2}>0$. Agent~2 has the same valuation for $x_{1,2}=1-x_{1,1}$ 
fractions of item~1 and $(1-x_{1,1}) \frac{v_{2,1}}{v_{2,2}}$ fractions of 
item~2. The valuation of agent~1 for $(1-x_{1,1}) \frac{v_{2,1}}{v_{2,2}}$ 
fractions of item~2 is identical to the valuation for $(1-x_{1,1}) \frac{v_{2,1} 
v_{1,2}}{v_{2,2} v_{1,1}}$ fractions of item~1. We know that ${v_{2,1} v_{1,2}}
<{v_{2,2} v_{1,1}}$. That is, that the utility of agent~1 is 
increased and the utilities of agent~2 and the auctioneer are not decreased, 
when agent~1 trades $(1-x_{1,1}) \frac{v_{2,1}}{v_{2,2}}$ fractions of item~2 
against $x_{2,1}=1-x_{1,1}$ fractions of item~1. Fact \ref{it:mc3a} implies that 
agent~1 actually has the required $(1-x_{1,1}) \frac{v_{2,1}}{v_{2,2}}$ 
fractions of item~2. Contradiction to \POall! 
\end{enumerate}

%% section: proof of lem:mult-case3b
\section{Proof of Lemma~\ref{lem:mult-case3b}}
We divide the proof into the following parts: in \ref{it:mc4a} we show that 
$p_{1}=b_{1}$ and $x_{1,2}<1$, in \ref{it:mc4b} we show that 
$\frac{b_1-v_{2,1}}{v_{2,2}} \geq x_{1,2} \geq \frac{b_{1}-v_{2,1}}{v_{1,2}}$, 
and in  \ref{it:mc4cn} we show that $x_{1,2}=\frac{b_1 - v_{2,1}}{v_{2,2}}$.
\begin{enumerate}
\item
\label{it:mc4a}
Lemma~\ref{lem:mult-case3a} implies that the utility of agent~1 is $v_{1,1} + 
x_{1,2} v_{1,2} - p_1$. We know that $v_{2,1}+v_{2,2} > b_1$. 
Hence, we can select a sufficiently small $\epsilon > 0$ such that $v_{2,1} +
v_{2,2} -\epsilon > b_1$. Because of \icbp\ and \icd, we know that 
$v_{2,2}-\epsilon > 0$. Let us consider the case where we have $v'_{1,2}$ with 
$v_{2,2}>v'_{1,2}>v_{2,2}-\epsilon$ instead of $v_{1,2}$ and all other valuation 
are left unchanged. In this case, the utility of agent~1 is $v_{1,1} - v_{2,1}$, 
because of Lemma~\ref{lem:mult-case2} and since $v_{2,2}>v'_{1,2}$ holds. 
Therefore, IC implies that
\begin{equation}
  v_{1,1} - v_{2,1} \geq v_{1,1} + x_{1,2} v'_{1,2} - p_1.
\label{eq:mc4a}
\end{equation}
Let us assume by contradiction that $x_{1,2}=1$, then equation (\ref{eq:mc4a}) 
implies
\begin{equation}
  p_1 \geq v_{2,1} + v'_{1,2} > v_{2,1} + v_{2,2} - \epsilon > b_1,
\end{equation}
which contradicts the budget constraint. Therefore, $x_{1,2}<1$, 
and hence, Lemma~\ref{lem:mult-case3a} implies that $p_1=b_1$.
%%%
\item
\label{it:mc4b}
Lemma~\ref{lem:mult-case3a} and \ref{it:mc4a} show that the utility of agent~1 
is $v_{1,1} + x_{1,2} v_{1,2} - b_1$. We select a sufficiently small $\epsilon 
> 0$, such that $v_{2,1} +v_{2,2} -\epsilon > b_1$ and consider the case where 
$v'_{1,2}=v_{2,2}-\epsilon$ and all other valuations are unchanged. 
Lemma~\ref{lem:mult-case2} implies that the utility of agent~1 is $v_{1,1}-
v_{2,1}$ in this case. Hence, IC implies that
\begin{equation}
  v_{1,1} - v_{2,1} \geq v_{1,1} + x_{1,2} v'_{1,2} - b_1
\label{eq:mc4b1}
\end{equation}
and
\begin{equation}
  v_{1,1} + x_{1,2} v_{1,2} - b_1 \geq v_{1,1} -v_{2,1}.
\label{eq:mc4b2}
\end{equation}
Inequality (\ref{eq:mc4b1}) implies that $\frac{b_1-v_{2,1}}{v_{2,2}-\epsilon}=
\frac{b_1-v_{2,1}}{v'_{1,2}}\geq x_{1,2}$. Since this inequality has to hold for 
all sufficiently small $\epsilon >0$, we know that $\frac{b_1 - v_{2,1}}{v_{2,2}}
\geq x_{1,2}$. Inequality (\ref{eq:mc4b2}) implies that $\frac{b_1-v_{2,1}}{v_{1,2}}
\leq x_{1,2}$.
%%%
\item\label{it:mc4cn}
Let us assume by contradiction that the inequality $\frac{b_1-v_{2,1}}{v_{2,2}} 
\geq x_{1,2}$ implied by \ref{it:mc4b} is strict, and $\gamma > 0$ is defined 
such that $\frac{b_1-v_{2,1}}{v_{2,2}} = x_{1,2} + \gamma$. We select arbitrary 
$\epsilon>0$ and $\delta$ with $ v_{2,2} \left( \frac{b_1 -v_{2,1}}{b_1 - 
v_{2,1} - \gamma v_{2,2}} -1 \right) >\delta>0$ which fulfill $v_{1,2}-\epsilon 
-\delta =v_{2,2}$. Such variables $\epsilon$ and $\delta$ exist because of 
\iccp, and since \icbp, \icd\ and $\gamma>0$ imply 
that $\frac{b_1 -v_{2,1}}{b_1 - v_{2,1} - \gamma v_{2,2}} > 1$. We consider the 
alternative case where $v'_{1,2}=v_{1,2}-\epsilon$ and all other valuations are 
unchanged. In this case, \ref{it:mc4b} implies that $\frac{b_1-v_{2,1}}{v'_{1,2}}
\leq x'_{1,2}$, and hence, $\frac{b_1-v_{2,1}}{v_{2,2} + \delta}\leq x'_{1,2}$. 
Furthermore, Lemma~\ref{lem:mult-case3a} and \ref{it:mc4a} imply that $p_1=b_1$ 
and $x_{1,1}=1$ in both cases. Now, IC requires that
\begin{equation}
  v_{1,1} + x_{1,2} v_{1,2} - b_1 \geq v_{1,1} + x'_{1,2} v_{1,2} -b_1,
\end{equation}
respectively $x_{1,2}\geq x'_{1,2}$, and therefore, $\frac{b_1 -v_{2,1}}{v_{2,2}} 
- \gamma \geq \frac{b_1 - v_{2,1}}{v_{2,2} + \delta}$. But this inequality can 
be transformed to $\delta \geq v_{2,2} \left( \frac{b_1 -v_{2,1}}{b_1 - v_{2,1} 
- \gamma v_{2,2}} -1 \right)$. Contradiction!
\end{enumerate}

%% section: adaptive clinching auction
\section{Adaptive Clinching Auction for a Single Divisible Item}
\label{app:clinching-auction}

We investigate the adaptive clinching auction for a single divisible item that 
is described as a ``continuous time process'' in \cite{BhattacharyaEtAl10} in 
order to construct an explicit algorithm. In Step (II) of the differential 
process described in \cite{BhattacharyaEtAl10} the item is overdemanded and no 
bidder exits the auction because his valuation is identical to the current price. 
This is the case which has to be analyzed. We consider a time span $[t_1,t_2]$. 
$A$ is the set of active bidders at time $t_1$ and $C$ is the set of clinching 
bidders at time $t_1$. For all $t \in [t_1,t_2]$ we have $p(t)=p(t_1)+(t-t_1)$. 
%% todo(ms): I've replaced v_i \ge p(t) with v_i > p(t)
We assume that $t_2$ is selected such that $v_i > p(t)$ for all $i \in A$ 
and $t \in [t_1,t_2)$. Therefore, the set of active bidders at time $t\in [t_1,
t_2)$ is equal to $A$ and the set of exiting bidders at time $t\in [t_1,t_2)$ is 
empty. We assume further that $t_2$ is selected such that no bidder starts 
clinching during $(t_1,t_2)$, and that the demand $D(t)=\sum_{i\in A}\frac{b_i(t)}
{p(t)}$ is larger than the supply $S(t)$ for all $t\in [t_1,t_2)$. Hence, at every 
time $t$ in $(t_1,t_2)$ Step (II) of the process is selected and the set of 
clinching bidders $C$ does not change.

Consider time $t$ in $(t_1,t_2)$. By the definition of the clinching bidders 
the supply is given by $S(t)=\sum_{j\in A\setminus\{ i\}} \frac{b_j(t)}{p(t)}$ 
for all $t \in (t_1,t_2)$ and every clinching bidder $i \in C$. Since every 
clinching bidder $i \in C$ gets the same fraction allocated during $(t_1,t)$ we 
have
%% todo(ms): I've switched the order of S(t) and S(t_1) in the first fraction
\begin{equation*}
x_i(t)-x_i(t_1) = \frac{S(t_1)-S(t)}{|C|}
                = \frac{1}{|C|}\sum_{j\in A\setminus \{ i\}}\left( \frac{b_j(t_1)}
                  {p(t_1)} - \frac{b_j(t)}{p(t)} \right).
\end{equation*}
Let us now differentiate this equation with respect to $t$. We get 
%% todo(ms): I added a step so that it is clear that you are using p'(t) = 1
\begin{equation*}
x'_i(t) = \frac{1}{|C|}\sum_{j\in A\setminus \{ i\}}\left( - \frac{b'_j(t)p(t) - b_j(t)p'(t)}{p(t)^2} \right)
        = \frac{1}{|C|}\sum_{j\in A\setminus \{ i\}}\left( \frac{b_j(t)}{p(t)^2} - \frac{b'_j(t)}{p(t)} \right).
\end{equation*}
%% todo(ms): I rephrased the explanation here
Bidder $i \in C$ pays $p(t)$ for the fractions that he is clinching at time $t$. 
Hence $b'_i(t) = - x'_i(t) p(t)$. This, the previous equality, $b'_j(t) = 0$ for 
$j \in A \setminus C$, and $b'_j(t) = b'_i(t)$ and $b_j(t) = b_i(t)$ for $j \in C 
\setminus \{i\}$ implies
%% old version
% Each bidder $i \in C$ has to pay $p(t)$ for the fractions that he is clinching 
% at time $t$. Therefore, $x'_i(t)=-\frac{b'_i(t)}{p(t)}$. We know that $b'_j(t)=0$ 
% for all $j\in A\setminus C$, $b'_j(t)=b'_i(t)$ for all $j\in C$, and $b_j(t)=
% b_i(t)$ for all $j\in C$. We can therefore deduce that
%% todo(ms): replaced b_j(t) with b_j(t_1) in third equality
\begin{equation*}
b'_i(t) = -\sum_{j\in A\setminus \{ i\}} \frac{b_j(t)}{p(t)}
        = \frac{-\sum_{j\in A\setminus C}{b_j(t_1)}-(|C|-1)b_i(t)}{p(t_1)-t_1+t}.
\end{equation*}
For the case that $|C|>1$ we can solve this differential equation and obtain
\begin{equation*}
b_i(t)=\frac{1}{|C|-1}\left(\left( \frac{p(t_1)}{p(t)} \right)^{|C|-1} 
       \sum_{j\in A\setminus \{ i \}} b_j(t_1) -\sum_{j\in A\setminus C} b_j(t_1)\right).
\end{equation*}
Since $b_j(t)=b_j(t_1)$ for all $j\in A\setminus C$ and $b_j(t)=b_i(t)$ for all 
$j\in C$ it follows that
\begin{equation}
\sqrt[|C|-1]{\frac{\sum_{j\in A\setminus\{ i\}} b_j(t_1)}{\sum_{j\in A\setminus\{ i\}} b_j(t)}}
=\frac{p(t)}{p(t_1)}.
\label{eq:priceCg1}
\end{equation}
For the case that $|C|=1$ we have
%% todo(ms): changed b_j(t) to b_j(t_1) in the equality below
\begin{equation*}
b'_i(t)= -\frac{\sum_{j\in A\setminus \{i\}} b_j(t_1)}{p(t_1)-t_1+t}
\end{equation*}
and obtain
\begin{equation}
\exp\left( \frac{b_i(t_1)-b_i(t)}{\sum_{j\in A\setminus \{ i\}} b_j(t_1)}\right) 
= \frac{p(t)}{p(t_1)}.
\label{eq:priceCis1}
\end{equation}
Equations (\ref{eq:priceCg1}) and (\ref{eq:priceCis1}) allow us to compute the 
prices where a new bidder would start clinching.

The construction of the algorithm follows the differential process described in 
\cite{BhattacharyaEtAl10}. Lines 28-35 of Algorithm~\ref{alg:clinch} correspond 
to Step (I) in \cite{BhattacharyaEtAl10}; lines 13-23 of Algorithm~\ref{alg:clinch} 
correspond to Step (III) in \cite{BhattacharyaEtAl10}; and Algorithm~\ref{alg:contclinch}, 
which is called on line~10 of Algorithm~\ref{alg:clinch} corresponds to Step (II) 
in \cite{BhattacharyaEtAl10}. Line 16, 20, and 22 of Algorithm~\ref{alg:contclinch} 
follow from equation (\ref{eq:priceCg1}) and (\ref{eq:priceCis1}). The variables 
that are used in Algorithm~\ref{alg:clinch} and Algorithm~\ref{alg:contclinch} 
are described in Table~\ref{tab:vars}.

%% todo(ms): changed wording here
For the running time observe that each time one of the two while-loops gets 
executed either an active bidder who was not clinching becomes a clinching 
bidder or an active bidder becomes an exiting bidder. Since an exiting bidder 
cannot become active again and an active clinching bidder cannot become an 
active non-clinching bidder again it follows that the algorithm runs in time 
polynomial in the bidders.
%% old version
%The algorithms take obviously polynomial time in number of bidders. Each time 
%when one of the two while-loops gets executed, either an active bidder becomes 
%a clinching bidder or an exiting bidder. Since an exiting bidder cannot become 
%an active bidder and a clinching bidder cannot become an active non-clinching 
%bidder, the algorithms take polynomial time in the number of bidders.

%% variables
\begin{table}[h]
\caption{Description of the Variables in Algorithm~\ref{alg:clinch}}{
\begin{tabular}{clcl}
\hline
\textit{Variable}&\textit{Data Type}&\textit{Constraint}&\textit{Description}\\\hline
$n$&integer (constant)&$n>1$&number of bidders\\
$b$&real vector (length $n$)&$b_i>0\ \forall i\in \{ 1,\dots,n\}$&budgets\\
$v$&real vector (length $n$)&$v_i>0\ \forall i\in \{ 1,\dots,n\}$&valuations\\
$A$&set of integers&$A \subseteq\{ 1,\dots,n\}$&set of active bidders\\
$E$&set of integers&$E \subseteq\{ 1,\dots,n\}$&set of exiting bidders\\
$C$&set of integers&$C \subseteq\{ 1,\dots,n\}$&set of clinching bidders\\
$p$&real&$p\geq 0$&price\\
$x$&real vector (length $n$)&$x_i\geq 0\ \forall i\in \{ 1,\dots,n\}$&allocated amount\\
$S$&real&$S\geq 0$&supply\\
$D$&real&$D\geq 0$&aggregated demand\\
\hline
\end{tabular}
}
\label{tab:vars}
\end{table}

%% pseudocode
\begin{algorithm}[H]
\caption{Adaptive Clinching Auction for a Single Divisible Good.}
\label{alg:clinch}
\begin{algorithmic}[1]
\small
\Procedure{Clinching}{$n,b,v$}
	\State \Comment{initialize variables}
	\State $(p,S,D)\gets (0,1,\infty)$
	\State $(A,E,C)\gets (\{1,\dots,n\},\emptyset,\emptyset)$
        \State $x_i\gets 0\ \forall i\in A$
	\While{$D> S$} 
	\algstore{alg:slot:1}
	\end{algorithmic}
	\end{algorithm}
	\addtocounter{algorithm}{-1}
        	
	\begin{algorithm}[H]
	\caption{Adaptive Clinching Auction for a Single Divisible Good \cont.}
	\begin{algorithmic}[1]
	\small
	\algrestore{alg:slot:1}
        \State \Comment{item is overdemanded}
		\If{$E=\emptyset$}
			\State \Comment{there are no exiting bidders}
			\State $(A,E,C,p,S,D,x,b)\gets\textsc{ContinuousClinching}(A,C,p,S,D,x,b,v)$
		\Else
			\State \Comment{there are exiting bidders}
			\State $m\gets \sum_{i\in E}b_i$
			\State $E\gets\emptyset$
			\While{$m>0$}
				\State\Comment{compute amount that the clinching bidders can clinch}
				\State\Comment{before a new bidder starts clinching}
				\State $c\gets \min\{|C|(\min_{i\in C} b_i - 
                                       \max_{i\in A\setminus C} b_i), m\}$
				\State $m\gets m-c$
				\State $(x_i,b_i)\gets (x_i + \frac{c}{|C|p}, 
                                       b_i-\frac{c}{|C|})\ \forall i\in C$
				\State $(S,D)\gets (S-\frac{c}{p},D-\frac{c}{p})$
 				\State $C\gets\{ i\in A| D-\frac{b_i}{p}=S\}$
			\EndWhile
		\EndIf
	\EndWhile
	\State \Comment{item is not overdemanded}
	\State \Comment{sell to active bidders the amount they can afford}
	\State $S\gets S-D$
	\State $(x_i,b_i)\gets (x_i +\frac{b_i}{p},0)\ \forall i\in A$
	\State \Comment{sell left fractions to exiting bidders}
	\For{$i\in E$}
		\State $m\gets \min\{ \frac{b_i}{p}, S\}$
		\State $(x_i,b_i)\gets (x_i + m, b_i - m p)$
		\State $S\gets S - m$
	\EndFor
	\State \textbf{return} $(x,b)$
\EndProcedure
\end{algorithmic}
\end{algorithm}

\vspace{-12pt}

%% pseudocode
\begin{algorithm}[H]
\caption{Continuous Clinching.}
\label{alg:contclinch}
\begin{algorithmic}[1]
\small
\Procedure{ContinuousClinching}{$A,C,p,S,D,x,b,v$}
    \If{$C=\emptyset$}
       	\State \Comment{compute highest price $p$ where no bidder clinched}
       	\State \Comment{or exited the auction before}
 	\State $p^*\gets \frac{\sum_{i\in A}b_i - \max_{i\in A}b_i}{S}$
	\State $p\gets \min \{ p^*,\min_{i\in A} v_i\}$
        \State\Comment{update variables}
	\State $(A,E)\gets(\{i \in A| v_i > p\},\{i \in A| v_i = p\})$
	\algstore{alg:slot:1}
	\end{algorithmic}
	\end{algorithm}
	\addtocounter{algorithm}{-1}
        	
	\begin{algorithm}[H]
	\caption{Continuous Clinching \cont.}
	\begin{algorithmic}[1]
	\small
	\algrestore{alg:slot:1}
	\State $D\gets \frac{\sum_{i\in A} b_i}{p}$
	\State $C\gets \{i\in A| D-\frac{b_i}{p}=S\}$
    \Else
    	\State \Comment{compute next break point of the differential process}
    	\State \Comment{and update variables}  
        \State $b^*\gets \max_{j \in A\setminus C} b_j$
        \State\Comment{price where a new bidder would start to clinch}
	\State $p^*\gets 
               \begin{cases}
	       p \exp \left( \frac{\max_{i\in A b_i - b^*}}{\sum_{i\in A
                 \setminus C}b_i}\right),\ &\text{if}\ |C|=1\\
	       p \left( \sqrt[|C|-1]{\frac{p S}{(|C|-1)b^*+\sum_{i\in A
                 \setminus C}b_i}}\right),\ &\text{if}\ |C|>1
	       \end{cases}$
	\State \Comment{price at the next break point}
	\State $\tilde{p}\gets \min\{ p^*,\min_{i\in A} v_i\}$ 
        \State \Comment{supply at the next break point}
	\State $\tilde{S}\gets (\frac{p}{\tilde{p}})^{|C|}S$
	\State \Comment{budget of the clinching bidder at next break point} 
        %% new
%	\algstore{alg:slot:2}
%	\end{algorithmic}
%	\end{algorithm}
%	\addtocounter{algorithm}{-1}
%        	
%	\begin{algorithm}[H]
%	\caption{Continuous Clinching \cont.}
%	\begin{algorithmic}[1]
%	\small
%	\algrestore{alg:slot:2}
	%% end new 
	\State $\tilde{b}\gets 
               \begin{cases}
	       \max_{i\in A}b_i - \log\left(\frac{\tilde{p}}{p}\right)
               \sum_{i\in A\setminus C}b_i,\ &\text{if}\ |C|=1\\
		  \frac{1}{|C|-1}\left(  S \frac{p^{|C|}}{\tilde{p}^{|C|-1}} -
               \sum_{i\in A\setminus C} b_i \right),\ &\text{if}\ |C|>1
	       \end{cases}$
	\State\Comment{update variables}
	\State $(x_i,b_i)\gets (x_i + \frac{1}{|C|} (S-\tilde{S}),\tilde{b})\ \forall i\in C$
	\State $E\gets\{i \in A| v_i = p\}$
	\State $A\gets\{i \in A| v_i > p\}$
	\State $C\gets \arg \max_{i\in A} b_i$
	\State $(p,S,D)\gets (\tilde{p},\tilde{S},\sum_{i\in A}\frac{b_i}{p})$
    \EndIf
    \State \textbf{return} $(A,E,C,p,S,D,x,b)$
\EndProcedure
\end{algorithmic}
\end{algorithm}

%% section: proof of prop:po-sing-dim
\section{Proof of \propref{prop:po-sing-dim}}
%%
%% part 1: po => nt
First we show that if $(x,p)$ satisfies PO, then it satisfies NT. To this end we
show that if $(x,p)$ does {\em not} satisfy NT, then it is {\em not} PO.

{\em Case~1:} $\lnot$ NT because $\lnot$ (a)

There exists an item $j \in M$ such that $\sum_{i \in N} x_{i,j} < 1$. By 
assumption every agent $i \in N$ has $v_i > 0$ and, thus, $\alpha_j v_i > 
0.$ Consider the outcome $(x',p')$ that results from assigning the unassigned
fraction of item $j$ to some agent $i' \in N$ at no additional cost. For this 
outcome we have $u'_i = u_i$ for all agents $i \in N \setminus \{i'\}$, $u'_{i'} 
> u_{i'}$ for agent $i'$, and $\sum_{i \in N} p'_i = \sum_{i \in N} p_i$. Hence
$(x,p)$ is {\em not} PO.

{\em Case~2:} $\lnot$ NT because $\lnot$ (b)

There exists an assignment $x'$ such that $\sum_{i \in N} \delta_i v_i > 0$ and 
$\sum_{i \in W} \min(b_i - p_i, \delta_i v_i) + \sum_{i \in L} \delta_i v_i \ge 
0$. Consider the outcome $(x',p')$ for which $p'_i = p_i + \min(b_i - p_i, 
\delta_i v_i)$ for all agents $i \in W$ and $p'_i = p_i + \delta_i v_i$ for all 
agents $i \in L$.

For all agents $i \in N$ we have $u'_i \ge u_i$ because
\begin{align}
  u'_i &= \sum_{j \in M} x'_{i,j} \alpha_j v_i - p'_i \notag\\
       &= \sum_{j \in M} x_{i,j} \alpha_j v_i + \delta_i v_i - p_i - \min(b_i - 
          p_i, \delta_i v_i) \notag\\
       &\ge u_i, &&\text{for $i \in W$, and}\label{eq:agent} \displaybreak[0]\\
  u'_i &= \sum_{j \in M} x'_{i,j} \alpha_j v_i - p'_i \notag\\  
       &= \sum_{j \in M} x_{i,j} \alpha_j v_i+ \delta_i v_i - p_i - \delta_i v_i \notag\\
       &= u_i &&\text{for $i \in L$.}\notag
\end{align}

For the auctioneer we have $\sum_{i \in N} p'_i \ge \sum_{i \in N} p_i$ because
\begin{align}
  \sum_{i \in N}p'_i - \sum_{i \in N}p_i
  &= \sum_{i \in W}p'_i + \sum_{i \in L}p'_i - \sum_{i \in N}p_i \displaybreak[0] \notag\\
  &= \sum_{i \in W}(p_i + \min(b_i-p_i, \delta_i v_i)) + \sum_{i \in L} (p_i + 
     \delta_i v_i) - \sum_{i \in N}p_i \displaybreak[0] \notag\\
  &= \sum_{i \in W}\min(b_i - p_i, \delta_i v_i) + \sum_{i \in L}\delta_i v_i \notag\\
  &\ge 0. \label{eq:auctioneer}         
\end{align}

If $\sum_{i \in W} \min(b_i - p_i, \delta_i v_i) + \sum_{i \in L} \delta_i v_i > 
0$, then inequality (\ref{eq:auctioneer}) is strict showing that $\sum_{i \in 
N} p'_i > \sum_{i \in N} p_i$. Otherwise, $\sum_{i \in W} \min(b_i - p_i, 
\delta_i v_i) + \sum_{i \in L} \delta_i v_i = 0$, and since $\sum_{i \in N} 
\delta_i v_i > 0$ we must have $b_i - p_i < \delta_i v_i$ for at least one agent 
$i\in W$. For this agent $i$ inequality (\ref{eq:agent}) is strict showing that
$u'_i > u_i.$ Hence in both cases $(x,p)$ is {\em not} PO.

%% part 2: nt => po
Next we show that if $(x,p)$ satisfies NT, then it is PO. To this end we show 
that if $(x,p)$ is {\em not} PO, then it does {\em not} satisfy NT. If $(x,p)$
is {\em not} PO, then there exists an outcome $(x',p')$ such that $u'_i \ge u_i$ 
for all agents $i \in N$ and $\sum_i p'_i \ge \sum_i p_i$, with at least one of 
the inequalities strict.

If {\em not} all items are assigned completely in $(x,p)$, then we have $\lnot$ 
(a) and so $(x,p)$ does {\em not} satisfy NT. Otherwise, if in $(x,p)$ all items 
are assigned completely, then to show that $(x,p)$ does {\em not} satisfy NT we 
have to show $\lnot$ (b). To this end consider the assignment $x'$ and let 
$\delta_i = \sum_{j \in M} (x'_{i,j} - x_{i,j}) \alpha_j$ for $i \in N$, let $W 
= \{i \in N \mid \delta_i > 0\}$, and let $L = \{i \in N \mid \delta_i \le 0\}.$

We begin by showing that $\sum_{i \in W} \min(b_i - p_i, \delta_i v_i) + \sum_{i 
\in L} \delta_i v_i \ge 0.$ 

For $i \in N$ we have $p'_i - p_i \le \min(b_i - p_i, \delta_i v_i)$ because
%% PD: Double check that p'_i \le b_i
\begin{align}
  &p'_i \le b_i \ \ \Rightarrow \ p'_i - p_i \le b_i - p_i, && \text{and} \notag \\
  &u'_i \ge u_i \ \Rightarrow \ p'_i - p_i \le \delta_i v_i. \notag
\end{align}

It follows that
\begin{align}
  \sum_{i \in W} \min(b_i - p_i, \delta_i v_i) + \sum_{i \in L} \delta_i v_i 
  &\ge \sum_{i \in W} (p'_i - p_i) + \sum_{i \in L} (p'_i - p_i) \notag\\
  &= \sum_{i \in N} p'_i - \sum_{i \in N} p_i
  \ge 0. \notag
\end{align}

Next we show that $\sum_{i \in N} \delta_i v_i > 0$.

Since $u'_i \ge u_i$ for all $i \in N$ and $\sum_{i \in N} p'_i \ge \sum_{i \in 
N} p_i$ we have
\begin{align}
  \sum_{i \in N} u'_i \ge \sum_{i \in N} u_i
  &\Leftrightarrow \sum_{i \in N} (\sum_{j \in M} x'_{i,j} \alpha_j v_i - p'_i) 
                   \ge 
                   \sum_{i \in N} (\sum_{j \in M} x_{i,j} \alpha_j v_i - p_i) \notag\\
  &\Leftrightarrow \sum_{i \in N} (\sum_{j \in M}(x'_{i,j}-x_{i,j})\alpha_j v_i) 
                   \ge 
                   \sum_{i \in N} p'_i - \sum_{i \in N} p_i \notag\\
  &\Rightarrow     \sum_{i \in N} \delta_i v_i 
                   \ge 
                   \sum_{i \in N} p'_i - \sum_{i \in N} p_i 
                   \ge 0. \label{eq:star}
\end{align}

If $u'_i > u_i$ for some $i \in N$, then $\sum_{i \in N} u'_i > \sum_{i \in N} 
u_i$ and, thus, the first inequality in (\ref{eq:star}) is strict. Otherwise,
if $\sum_{i \in N} p'_i > \sum_{i \in N} p_i$, then the second inequality in 
(\ref{eq:star}) is strict. In both cases strictness of the inequality implies 
that $\sum_{i \in N} \delta_i v_i > 0$.

%% section: proof of prop:ic-sing-dim
\section{Proof of \propref{prop:ic-sing-dim}}
%%
%% part 1: vm & pi => ic
We begin by showing that if $M$ satisfies VM and PI, then it satisfies IC. For a 
contradiction assume that $M$ satisfies VM and PI, but that it does {\em not}
satisfy IC. Then there exists $i \in N$, $\theta_i=(v_i,b_i)$, $\theta'_i=(v'_i,
b_i)$, and $\theta_{-i} = (v_{-i}, b_{-i})$ with $v_i \neq v'_i$ such that
\begin{align*}
  u_i(x_i(\theta'_i,\theta_{-i}), p(\theta'_i,\theta_{-i}), \theta_i) 
  > u_i(x_i(\theta_i,\theta_{-i}), p(\theta_i,\theta_{-i}), \theta_i).
\end{align*}

Let $c_{\gamma_{t}}(b_i,\theta_{-i}) \le v_i \le c_{\gamma_{t+1}}(b_i,
\theta_{-i})$ and let $c_{\gamma_{t'}}(b_i,\theta_{-i}) \le v'_i \le 
c_{\gamma_{t'+1}}(b_i,\theta_{-i}).$

If $v_i > v'_i$ then since $M$ satisfies VM and PI the utilities $u_i$ and 
$u'_i$ that agent $i$ gets from reports $\theta_i$ and $\theta'_i$ satisfy
\begin{align*}
  u_i - u'_i 
  &= (\gamma_{t} - \gamma_{t'}) v_i - \sum_{s=t'+1}^{t}(\gamma_s - \gamma_{s-1}) 
      c_{\gamma_s}(b_i,\theta_{-i})\\ 
  &\ge (\gamma_t - \gamma_{t'}) v_i - \sum_{s=t'+1}^{t} (\gamma_s - \gamma_{s-1}) 
      v_i\\ 
  &= 0.
\end{align*}

If $v_i < v'_i$ then since $M$ satisfies VM and PI the utilities $u'_i$ and 
$u_i$ that agent $i$ gets from reports $\theta'_i$ and $\theta_i$ satisfy
\begin{align*}
  u_i' - u_i 
  &= (\gamma_{t'}-\gamma_t) v_i - \sum_{s=t+1}^{t'}(\gamma_s - \gamma_{s-1}) 
      c_{\gamma_s}(b_i,\theta_{-i})\\
  &\le (\gamma_{t'} - \gamma_t) v_i - \sum_{s=t+1}^{t'} (\gamma_s - \gamma_{s-1}) 
      v_i\\
  &= 0.
\end{align*}

We conclude that in both cases agent $i$ is weakly better off when he reports 
truthfully. This contradicts our assumption that $M$ does {\em not} satisfy IC.

%% part 2a: ic => vm
Next we show that if $M$ satisfies IC, then it satisfies VM. By contradiction
assume that $M$ satisfies IC, but that it does {\em not} satisfy VM. Then there
exists $i \in N$, $\theta_i = (v_i,b_i)$, $\theta'_i = (v'_i,b_i)$, and 
$\theta_{-i} = (v_{-i},b_{-i})$ with $v_i < v'_i$ such that
\begin{align*}
  \sum_{j \in M} x_{i,j}(\theta_i,\theta_{-i}) \alpha_j 
  > \sum_{j \in M} x_{i,j}(\theta'_i,\theta_{-i}) \alpha_j.
\end{align*}
Since $M$ satisfies IC agent $i$ with type $\theta_i$ does {\em not} benefit 
from reporting $\theta'_i$, and vice versa. Thus,
\begin{align*}
  &\sum_{j \in M} x_{i,j}(\theta_i,\theta_{-i}) \alpha_j v_i - p_i(\theta_i,
   \theta_{-i}) 
   \ge 
   \sum_{j \in M} x_{i,j}(\theta'_i,\theta_{-i}) \alpha_j v_i - p_i(\theta'_i,
   \theta_{-i}), &&\text{and}\\% \label{eq:lion}\\
  &\sum_{j \in M} x_{i,j}(\theta'_i,\theta_{-i}) \alpha_j v'_i - p_i(\theta'_i,
   \theta_{-i}) 
   \ge 
   \sum_{j \in M} x_{i,j}(\theta_i,\theta_{-i}) \alpha_j v'_i - p_i(\theta_i,
   \theta_{-i}). %\label{eq:mouse}
\end{align*}
By combining these inequalities we get % inequality (\ref{eq:lion}) with inequality (\ref{eq:mouse}) we get
\begin{align*}
  &(\sum_{j \in M} x_{i,j}(\theta_i,\theta_{-i}) \alpha_j - \sum_{j \in M} 
  x_{i,j}(\theta'_i,\theta_{-i}) \alpha_j) (v_i-v'_i) \ge 0.
\end{align*}
Since $\sum_{j \in M} x_{i,j}(\theta_i,\theta_{-i}) \alpha_j > \sum_{j \in M} 
x_{i,j}(\theta'_i,\theta_{-i}) \alpha_j$ this shows that $v_i \ge v'_i$ and 
gives a contradiction to our assumption that $v_i < v'_i.$

%% part 2b: ic => pi
We conclude the proof by showing that if $M$ satisfies IC, then it satisfies PI.
For a contradiction assume that $M$ satisfies IC, but that it does {\em not} 
satisfy PI. Then there exists $i\in N$, $\theta'_i=(v'_i,b_i)$, and $\theta_{-i}
=(v_{-i},b_{-i})$ with $c_{\gamma_{t'}} \le v'_i \le c_{\gamma_{t'+1}}$ such 
that 
\begin{align*}
  p_i(\theta'_i,\theta_{-i}) \neq p_i((0,b_i),\theta_{-i}) + \sum_{s=1}^{t'}
  (\gamma_s - \gamma_{s-1}) c_{\gamma_s}(b_i,\theta_{-i}),
\end{align*}
where the $\gamma_s$ are the sum over the $\alpha$'s of all possible assignments 
in non-increasing order and the $c_{\gamma_s}(b_i,\theta_{-i})$ are the smallest 
valuations (or critical valuations) that make agent $i$ win $\gamma_s$.

Consider the smallest $v'_i$ such that this is the case. For this $v'_i$ we must 
have $v'_i=c_{\gamma_{t'}}(b_i,\theta_{-i})>c_{\gamma_0}(b_i,\theta_{-i})=0$. We 
must have $v'_i = c_{\gamma_{t'}}(b_i,\theta_{-i})$ because by VM agent $i$'s 
assignment for all reports $\theta''_i=(v''_i,b_i)$ with $v''_i$ such that 
$c_{\gamma_{t'}}(b_i,\theta_{-i})\le v''_i\le c_{\gamma_{t'+1}}(b_i,\theta_{-i})$ 
is the same and, thus, by IC he must face the same payment. We must have 
$c_{\gamma_{t'}}(b_i,\theta_{-i})>c_{\gamma_0}(b_i,\theta_{-i})=0$ because for 
$v'_i=0$ we have $p(\theta'_i,\theta_{-i})=p((0,b_i),\theta_{-i})$ by definition.

{\em Case~1:} $p_i(\theta'_i,\theta_{-i}) > p_i((0,b_i),\theta_{-i}) + \sum_{s=
              1}^{t'} (\gamma_s - \gamma_{s-1}) c_{\gamma_s}(b_i,\theta_{-i})$

Consider $\theta_i=(v_i,b_i)$ with $v_i < v'_i$ such that $c_{\gamma_{t'-1}}(b_i,
\theta_{-i}) \le v_i \le c_{\gamma_{t'}}(b_i,\theta_{-i})$. Since $v_i < v'_i$ 
we have $p_i(\theta_i,\theta_{-i})=p_i((0,b_i), \theta_{-i}) + \sum_{s=1}^{t'-1} 
(\gamma_s - \gamma_{s-1}) c_{\gamma_s}(b_i,\theta_{-i})$. If agent $i$'s type is 
$\theta'_i$ then for the utilities $u'_i$ and $u_i$ that he gets for reports 
$\theta'_i$ and $\theta_i$ we have
\begin{align*}
  u'_i - u_i 
  &< (\gamma_{t'} - \gamma_{t'-1}) v'_i - (\gamma_{t'} - \gamma_{t'-1}) 
     c_{\gamma_{t'}}(b_i,\theta_{-i}) = 0.
\end{align*} 
This shows that agent $i$ with type $\theta'_i$ has an incentive to misreport 
his type as $\theta_i$ and contradicts our assumption that $M$ satisfies IC.

{\em Case~2:} $p_i(\theta'_i,\theta_{-i}) < p_i((0,b_i),\theta_{-i}) + \sum_{s 
              = 1}^{t'} (\gamma_s - \gamma_{s-1}) c_{\gamma_s}(b_i,\theta_{-i})$

Let $\epsilon = p_i((0,b_i),\theta_{-i})+\sum_{s=1}^{t'}(\gamma_s-\gamma_{s-1}) 
c_{\gamma_s}(b_i,\theta_{-i}) - p_i(\theta'_i,\theta_{-i})$ and consider 
$\theta_i=(v_i,b_i)$ with $v_i < v'_i$ such that $c_{\gamma_{t'-1}}(b_i,
\theta_{-i}) \le v_i \le c_{\gamma_{t'}}(b_i,\theta_{-i})$. Since $v_i < v'_i$ 
we have $p_i(\theta_i,\theta_{-i}) = p_i((0,b_i),\theta_{-i})+\sum_{s=1}^{t'-1} 
(\gamma_s - \gamma_{s-1}) c_{\gamma_s}(b_i,\theta_{-i})$. If agent $i$'s type is 
$\theta_i$ then for the utilities $u'_i$ and $u_i$ that he gets from reports 
$\theta'_i$ and $\theta_i$ we have
\begin{align*}
  u'_i - u_i
  &= (\gamma_{t'} - \gamma_{t'-1}) v_i - (\gamma_{t'} - \gamma_{t'-1}) 
     c_{\gamma_{t'}}(b_i,\theta_{-i}) + \epsilon
\end{align*} 
Since this is true for all $v_i$ with $c_{\gamma_{t'-1}}(b_i,\theta_{-i})\le v_i
\le c_{\gamma_{t'}}(b_i,\theta_{-i})$ we can choose $v_i$ such that $(\gamma_{t'}
-\gamma_{t'-1}) (v_i-c_{\gamma_{t'}}(b_i,\theta_{-i})) > -\epsilon$. We get 
$u'_i - u_i > 0.$ This shows that agent $i$ with type $\theta_i$ has an 
incentive to misreport his type as $\theta'_i$ and contradicts our assumption 
that $M$ satisfies IC.

%% section: proof of pop:utilities
\section{Proof of Proposition~\ref{prop:utilities}}
First suppose that the payments are deterministic. If $p_i > b_i$ then 
$\ptilde_i > b_i$ and $u_i(x_i,p_i,(v_i,b_i)) = u_i(\xtilde_i,\ptilde_i,
(\vtilde_i,b_i))) = -\infty$. Otherwise, 
\begin{align*}
u_i(x_i,p_i,(v_i,b_i)) = \sum_{j=1}^{m} (x_{i,j} \alpha_j v_i) - p_i
                       = \xtilde_i \vtilde_i - \ptilde_i
                       = u_i(\xtilde_i,\ptilde_i,(\vtilde_i,b_i))).
\end{align*}

Next suppose that the payments are randomized. If $\Pr[p_i > b_i] > 0$ then 
$\Pr[\ptilde_i > b_i] > 0$ and $\E[u_i(x_i,p_i,(v_i,b_i))] = \E[u_i(\xtilde_i,
\ptilde_i,(\vtilde_i,b_i)))] = -\infty$.
Otherwise,
\begin{align*}
\E[u_i(x_i,p_i,(v_i,b_i))] = \E[\sum_{j=1}^{m} (x_{i,j} \alpha_j v_i) - p_i]
                           = \E[\xtilde_i \vtilde_i - \ptilde_i]
                           = \E[u_i(\xtilde_i,\ptilde_i,(\vtilde_i,b_i))]. %\qed
\end{align*}

\end{document}